\documentclass[12pt]{article}
\usepackage{amsmath}
\usepackage{graphicx}
\usepackage{natbib}
\usepackage{url} 

\newcommand{\blind}{0}

\addtolength{\oddsidemargin}{-.5in}%
\addtolength{\evensidemargin}{-.5in}%
\addtolength{\textwidth}{1in}%
\addtolength{\textheight}{1.3in}%
\addtolength{\topmargin}{-.8in}%

\usepackage{amssymb, amsmath, amsthm, amsbsy, bm, mathrsfs}
	\allowdisplaybreaks[4]
	
	\newtheorem{Lemma}{Lemma}
	\newtheorem{Proposition}{Proposition}
	\newtheorem{Remark}{Remark}

\usepackage[hidelinks]{hyperref}
\usepackage{algorithm, algpseudocode}
\usepackage{textcomp}
\usepackage{caption, subcaption}
\usepackage{enumitem}
	\setenumerate[1]{itemindent=0pt, itemsep=0pt, partopsep=0pt, parsep=\parskip, topsep=0pt}
\usepackage{color}
\usepackage{array, multirow}

\newcommand{\aalpha}{\boldsymbol{\alpha}}

\newcommand{\dd}{{\rm d}}

\newcommand{\SSigma}{\boldsymbol{\Sigma}}


\DeclareMathOperator*{\argmin}{arg\,min\,}
\DeclareMathOperator*{\cov}{cov}

\DeclareMathOperator*{\E}{E}

\DeclareMathOperator*{\var}{var}



\begin{document}

\def\spacingset#1{\renewcommand{\baselinestretch}%
{#1}\small\normalsize} \spacingset{1}


\if0\blind
{
  \title{\bf Fast implementation of partial least squares for function-on-function regression}
  \author{Zhiyang Zhou\thanks{
    The author gratefully acknowledge the Natural Sciences and Engineering Council of Canada (NSERC) for financial supports.
    } \hspace{.2cm}\\
    Department of Preventive Medicine\\Northwestern University Feinberg School of Medicine}
  \maketitle
} \fi

\if1\blind
{
  \bigskip
  \bigskip
  \bigskip
  \begin{center}
    {\LARGE\bf Fast implementation of partial least squares for function-on-function regression}
\end{center}
  \medskip
} \fi

\bigskip
\begin{abstract}
	People employ the function-on-function regression
	to model the relationship between two random curves. 
	Fitting this model,
	widely used strategies include algorithms
	falling into the framework of functional partial least squares 
	(typically requiring iterative eigen-decomposition).
	Here we introduce a route of functional partial least squares 
	based upon Krylov subspaces.
	It can be expressed in two forms equivalent to each other (in exact arithmetic):
	one is non-iterative with explicit forms of estimators and predictions,
	facilitating the theoretical derivation 
	and potential extensions (to more complex models);
	the other one stabilizes numerical outputs. 
	The consistence of estimators and predictions is established under regularity conditions.
	Our proposal is highlighted as it is less computationally involved.
	Meanwhile,
	it is competitive in terms of both estimation and prediction accuracy.
\end{abstract}

\noindent%
{\it Keywords:}  functional data analysis;
	functional linear model;
	Krylov subspace;
	partial least squares;
    principal component analysis.
\vfill

\newpage
\spacingset{1.2} 
\section{Introduction}\label{sec:intro}

Sometimes one would like to model the relationship between two stochastic curves.
To exemplify this type of interest,
two instances are listed as below.
\begin{description}
    \item[Diffusion tensor imaging (DTI) data] 
    	\citep[dataset \texttt{DTI} in \texttt{R} package \texttt{refund},]
    		[with missing values imputed through local polynomial regrssion]{R-refund}.
    	DTI is powerful
    	for characterizing microstructural changes for neuropathology
    	\citep{AlexanderLeeLazarField2007}.
        One of widely used DTI measures is the fractional anisotropy (FA).
        Along a tract of interest in the brain,
        FA values form
        an FA tract profile.
        Originally collected at the Johns Hopkins University and Kennedy-Krieger Institute,
        382 pairs of FA tract profiles for corpus callosum (CCA) and right corticospinal tract (RCST)
        are included in dataset \texttt{DTI} in \texttt{R} package \texttt{refund} \citep{R-refund}.
    	There are already interests on associations between 
    	FA trajectories for CCA and RCST;
    	see, e.g., \cite{IvanescuStaicuScheiplGreven2015}.
    \item[Boys' gait (BG) data] \citep[dataset \texttt{gait} in \texttt{R} package \texttt{fda},][]{R-fda}.
    	This dataset records hip and knee angles in degrees for 39 walking boys.
    	For each individual,
    	through a 20-point movement cycle,
    	these angles form two curves.
    	Then BG may be partially reflected by
    	the relationship between hip and knee curves.
\end{description}

As a fundamental model in the functional data analysis, 
the function-on-function regression 
(FoFR, first proposed by \citealp{RamsayDalzell1991}) 
may be helpful to the these scientific explorations.
Let $X=X(s)$ and $Y=Y(t)$
be two $L_2$-processes defined, respectively, 
on closed intervals $\mathbb I_X, \mathbb I_Y\subset\mathbb{R}$.
FoFR is formulated as
$$
	Y(t)=\mu_Y(t)+\int_{\mathbb I_X}\{X(s)-\mu_X(s)\}\beta^*(s,t)\dd s+\varepsilon(t),
$$
where $\beta^* \in L_2(\mathbb I_X\times \mathbb I_Y)$ is the target unknown parameter function
and $\mu_X(s)$ (resp. $\mu_Y(t)$) denotes $\E\{X(s)\}$ (resp. $\E\{Y(t)\}$).
Zero-mean Gaussian process $\varepsilon(t)$ has a covariance function 
continuous on $\mathbb I_Y\times \mathbb I_Y$
and is uncorrelated with $X(s)$,
i.e., $\E\{X(s), \varepsilon(t)\}=0$ for all $(s,t)\in\mathbb I_X\times\mathbb I_Y$.
This model becomes
$$
	Y(t)=\mu_Y(t)+\mathcal{L}_X(\beta^*)(t)+\varepsilon(t),
$$
defining a random integral operator 
$\mathcal{L}_X:L_2(\mathbb I_X\times \mathbb I_Y)\to L_2(\mathbb I_Y)$ such that,
for each $f\in L_2(\mathbb I_X\times \mathbb I_Y)$,
$$
	\mathcal{L}_X(f)(\cdot) = \int_{\mathbb I_X}\{X(s)-\mu_X(s)\}f(s,\cdot)\dd s.
$$
Write
$r_{XX}=r_{XX}(s,t)=\cov\{X(s),X(t)\}$,
and $r_{YY}=r_{YY}(s,t)=\cov\{Y(s),Y(t)\}$,
continuous respectively on $\mathbb I_X\times\mathbb I_X$ and $\mathbb I_Y\times\mathbb I_Y$.
Also,
we have continuous $r_{XY}=r_{XY}(s,t)=\cov\{X(s),Y(t)\}$, $(s,t)\in \mathbb I_X\times \mathbb I_Y$.
Correspondingly,
a linear integral operator $R_{XX}:L_2(\mathbb I_X)\to L_2(\mathbb I_X)$ is given by,
for each $f\in L_2(\mathbb I_X)$,
$R_{XX}(f)(\cdot) = \int_{\mathbb I_X}r_{XX}(\cdot,t)f(t)\dd t$.
One more operator $R_{YY}: L_2(\mathbb I_Y)\to L_2(\mathbb I_Y)$ is defined in complete analogy to $R_{XX}$.
Let $(\lambda_{j,X}, \phi_{j,X})$ (resp. $(\lambda_{j,Y}, \phi_{j,Y})$) 
be the two-tuple consisting of the $j$th leading eigenvalue and eigenfunction of 
$R_{XX}$ (resp. $R_{YY}$).
It is standard for functional data analysis to assume that 
$\sum_{j=1}^\infty\lambda_{j,X}<\infty$ and $\sum_{j=1}^\infty\lambda_{j,Y}<\infty$,
with positive $\lambda_{j,X}$ and $\lambda_{j,Y}$.
Ensuring the identifiability of $\beta^*$,
condition \ref{cond:identifiable} (assumed by, e.g., \citealp{HeMullerWangYang2010, YaoMullerWang2005b})
derives a closed-form of $\beta^*$, i.e.,
for each $(s,t)\in \mathbb I_X\times \mathbb I_Y$,
\begin{equation}\label{eq:beta*}
	\beta^*(s,t)
	= \Gamma_{XX}^{-1}(r_{XY})(s,t)
	= \sum_{j, j'=1}^\infty\frac
		{\int_{\mathbb I_Y}\int_{\mathbb I_X}\phi_{j,X}(s)r_{XY}(s,t)\phi_{j',Y}(t)\dd s\dd t}
		{\lambda_{j,X}}
		\phi_{j,X}(s)\phi_{j',Y}(t),
\end{equation}
where $\|\cdot\|_2$ denotes the $L_2$-norm
(and is abused for all the $L_2$ spaces involved hereafter);
and $\Gamma_{XX}:L_2(\mathbb I_X\times \mathbb I_Y)\to L_2(\mathbb I_X\times \mathbb I_Y)$
is a linear integral operator defined as,
for each $f\in L_2(\mathbb I_X\times \mathbb I_Y)$,
$$
	\Gamma_{XX}(f)(s, t) = 
		\int_{\mathbb I_X}r_{XX}(s,s')f(s', t)\dd s',\quad (s,t)\in \mathbb I_X\times \mathbb I_Y.
$$

Excellent contributions have been made to the investigation of FoFR.
In general,
due to the intrinsically infinite dimension,
people have to consider 
an approximation to $\beta^*$ within certain subspaces of $L_2(\mathbb I_X\times\mathbb I_Y)$.
Traditionally,
these subspaces are constructed from pre-determined functions, 
e.g., splines and Fourier basis functions.
But a more prevailing option may be data-driven: 
the functional principal component regression (FPCR)
drops the tail of the series on the farthest right-hand side of \eqref{eq:beta*}
and approximates $\beta^*$ by its orthogonal projection to
${\rm span}\{f_{jj'}\in L_2(\mathbb I_X\times \mathbb I_Y)\mid 
    f_{jj'}(s,t)=\phi_{j,X}(s)\phi_{j',Y}(t), 1\leq j\leq p, 1\leq j'\leq p'\}$,
with $p$ and $p'$ chosen by cross-validation 
and ${\rm span}(\cdot)$ denoting the linear space spanned by elements inside the braces;
specifically,
FPCR approximates $\beta^*$ by
\begin{equation}\label{eq:beta.p.q.fpca}
	\beta_{p,p',\rm FPCR}(s,t)
	= \sum_{j=1}^p\sum_{j'=1}^{p'}
		\frac{\int_{\mathbb I_Y}\int_{\mathbb I_X}\phi_{j,X}(v)r_{XY}(v,w)\phi_{j',Y}(w)\dd v\dd w}{\lambda_{j,X}}\phi_{j,X}(s)\phi_{j',Y}(t).
\end{equation}
Accompanied with a penalized estimation,
\citet{Lian2015} and \citet{SunDuWangMa2018} 
limited their discussions on coefficient estimators to reproducing kernel Hilbert spaces.
The Tikhonov (viz. ridge-type) regularization in \citet{BenatiaCarrascoFlorens2017}
yields a remedy for ill-posed $\beta^*$ when not all $\lambda_{j,X}$ are non-zero.
Distinct from these works,
our consideration is based on 
a subspace of $L_2(\mathbb I_X\times \mathbb I_Y)$ named after (Alexei) Krylov, viz.
\begin{equation}\label{eq:krylov}
	{\rm KS}_p(\Gamma_{XX}, \beta^*)
	={\rm span}\{\Gamma_{XX}^j(\beta^*)\mid 1\leq j\leq p\},
\end{equation}
where $\Gamma_{XX}^0$ is indeed the identity operator $I$, 
while $\Gamma_{XX}^j:L_2(\mathbb I_X\times \mathbb I_Y)\to L_2(\mathbb I_X\times \mathbb I_Y)$, $j\geq 1$,
is defined recursively as,
for each $f\in L_2(\mathbb I_X\times \mathbb I_Y)$ and each $(s,t)\in \mathbb I_X\times \mathbb I_Y$,
\begin{align*}
	\Gamma_{XX}^j(f)(s,t)
	=&\ (\Gamma_{XX}\circ\Gamma_{XX}^{j-1})(f)(s,t)
	\\
	=&\ \Gamma_{XX}\{\Gamma_{XX}^{i-1}(f)\}(s,t)
	\\
	=&\ \int_{\mathbb I_X}r_{XX}(s,w)\{\Gamma_{XX}^{i-1}(f)(w,t)\}\dd w.
\end{align*}
Noting that $\Gamma_{XX}^j(\beta^*)=\Gamma_{XX}^{j-1}(r_{XY})$ for all $j\in\mathbb{Z}^+$,
the ($p$-dimensional) Krylov subspace at \eqref{eq:krylov} incorporates both $X$ and $Y$ 
and hence overcomes the unsupervision of truncated eigenspace used for FPCR.

The subspace at \eqref{eq:krylov} is a generalization of \citet[(3.4)]{DelaigleHall2012b},
expanding as well the Krylov subspace method
for the (multivariate) partial least squares (PLS).
In the multivariate context,
PLS is a terminology shared by a series of algorithms
yielding supervised (i.e., related-to-response) basis functions;
\citet[Section~2.2]{Bissett2015} briefed several well-known examples of them,
including the nonlinear iterative PLS 
(NIPALS, \citealp{Wold1975})
and the statistically inspired modification of PLS 
(SIMPLS, \citealp{deJong1993}). 
For single-vector-response,
these two lead to outputs identical to that from the Krylov subspace method;
but they are known to yield different results when the response is of more than one vectors;
see \citet[Section~7.2]{CookForzani2019}.
Likewise,
their respective functional counterparts are equivalent to each other for scalar-response
but become diverse again for FoFR.
We refer readers to \citet{BeyaztasShang2020}
for a straightforward extension of NIPALS and SIMPLS for FoFR.
Shooting at the same model,
SigComp \citep{LuoQi2017} embeds penalties into NIPALS.
It is \autoref{prop:conv.beta.p} that drives us to pick up the Krylov subspace method as our route.
\begin{Proposition}\label{prop:conv.beta.p}
	Under \ref{cond:identifiable},
	true parameter $\beta^*\in
	\overline{{\rm KS}_\infty(\Gamma_{XX}, \beta^*)}
	=\overline{{\rm span}\{\Gamma_{XX}^j(\beta^*)\mid j\geq 1\}}$,
	with the overline representing the closure.
\end{Proposition}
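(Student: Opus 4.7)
My plan is to proceed by duality. Let $V = \overline{{\rm KS}_\infty(\Gamma_{XX}, \beta^*)}$, a closed subspace of $L_2(\mathbb{I}_X \times \mathbb{I}_Y)$. Since $V$ is closed, it suffices to show that $V^\perp$ annihilates $\beta^*$: every $w \in L_2(\mathbb{I}_X \times \mathbb{I}_Y)$ with $\langle w, \Gamma_{XX}^k(\beta^*)\rangle = 0$ for all integers $k \geq 1$ must also satisfy $\langle w, \beta^*\rangle = 0$.

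First I would record the spectral structure of $\Gamma_{XX}$. Because the kernel $r_{XX}(s,s')$ is symmetric in its two arguments, $\Gamma_{XX}$ is self-adjoint on $L_2(\mathbb{I}_X \times \mathbb{I}_Y)$; and a direct computation shows $\Gamma_{XX}(\phi_{j,X}\otimes\psi) = \lambda_{j,X}(\phi_{j,X}\otimes\psi)$ for every $\psi \in L_2(\mathbb{I}_Y)$, so the product basis $\{\phi_{j,X}\otimes\phi_{j',Y}\}_{j,j' \geq 1}$ diagonalizes $\Gamma_{XX}$ with eigenvalues $\lambda_{j,X}$ (independent of $j'$). Writing $w = \sum_{j,j'} w_{jj'} \phi_{j,X}\otimes\phi_{j',Y}$, $r_{XY} = \sum_{j,j'} a_{jj'} \phi_{j,X}\otimes\phi_{j',Y}$, and, by \eqref{eq:beta*}, $\beta^* = \sum_{j,j'} (a_{jj'}/\lambda_{j,X})\phi_{j,X}\otimes\phi_{j',Y}$, I obtain $\Gamma_{XX}^k(\beta^*) = \sum_{j,j'} \lambda_{j,X}^{k-1} a_{jj'}\phi_{j,X}\otimes\phi_{j',Y}$ for each $k \geq 1$.

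The orthogonality hypothesis then becomes $\sum_{j,j'} w_{jj'} a_{jj'} \lambda_{j,X}^{k-1} = 0$ for every $k \geq 1$, that is, all non-negative integer moments of the signed Borel measure $\nu := \sum_{j,j'} w_{jj'} a_{jj'}\,\delta_{\lambda_{j,X}}$ vanish. By Cauchy--Schwarz, the total variation of $\nu$ is bounded by $\|w\|_2\,\|r_{XY}\|_2 < \infty$, and $\nu$ is supported in the compact interval $[0, \lambda_{1,X}]$. The Weierstrass approximation theorem combined with the Riesz representation theorem then force $\nu \equiv 0$; equivalently, for every distinct value $\mu$ taken by $\{\lambda_{j,X}\}$, one has $\sum_{(j,j'):\,\lambda_{j,X}=\mu} w_{jj'}\,a_{jj'} = 0$. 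Using this atomic vanishing together with the absolute convergence of $\sum_{j,j'} w_{jj'}\,(a_{jj'}/\lambda_{j,X})$, which follows from $w,\beta^* \in L_2$ via Cauchy--Schwarz, I group by distinct eigenvalues to obtain $\langle w, \beta^*\rangle = \sum_\mu \mu^{-1}\cdot 0 = 0$, as required.

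The step I expect to be the main obstacle is the measure-theoretic passage: since $0$ is an accumulation point of $\{\lambda_{j,X}\}$ and each eigenvalue has infinite multiplicity across $j'$, one must verify that collapsing the Krylov information onto the one-dimensional variable $\lambda$ (as encoded in $\nu$) retains enough to recover $\beta^* = \Gamma_{XX}^{-1}(r_{XY})$. The Weierstrass--Riesz route bypasses any explicit polynomial approximation of $1/\lambda$---which would be delicate because $\lambda_{j,X}\downarrow 0$ prevents uniform approximation---and instead leverages the fact that vanishing of all polynomial moments of a finite signed measure on a compact interval forces the measure to be zero, which delivers exactly the atomic identities needed to transfer orthogonality from $\{\Gamma_{XX}^k(\beta^*)\}_{k\geq 1}$ to $\beta^*$ itself.
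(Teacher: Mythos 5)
Your argument is correct, and it takes a genuinely different route from the paper's proof. You dualize: since $V=\overline{{\rm KS}_\infty(\Gamma_{XX},\beta^*)}$ is a closed subspace, $\beta^*\in V$ if and only if every $w\perp V$ also annihilates $\beta^*$; you then convert the hypothesis $\langle w,\Gamma_{XX}^k(\beta^*)\rangle=0$ for all $k\geq 1$ into the vanishing of all non-negative integer moments of the finite signed measure $\nu=\sum_{j,j'}w_{jj'}a_{jj'}\delta_{\lambda_{j,X}}$ on $[0,\lambda_{1,X}]$, so that Weierstrass density plus the Riesz representation theorem force $\nu\equiv 0$, and the resulting atomic identities together with the absolute convergence of $\sum_{j,j'}w_{jj'}a_{jj'}/\lambda_{j,X}$ (guaranteed by \ref{cond:identifiable} via Cauchy--Schwarz) yield $\langle w,\beta^*\rangle=0$. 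The paper instead argues constructively through FPCR truncations: the annihilating polynomial $(\lambda_{1,X}I-\Gamma_{XX})\circ\cdots\circ(\lambda_{p,X}I-\Gamma_{XX})$ kills $\beta_{p,\infty,\rm FPCR}=P_p(\beta^*)$, which expresses $P_p(\beta^*)$ through $P_p$-projections of the Krylov vectors, and then $p\to\infty$. Your route buys a cleaner endgame---it avoids the paper's final limiting step (which needs $P_p(\beta^*)\to\beta^*$ and a compatible limit of the projected approximants) and makes transparent exactly where \ref{cond:identifiable} enters---at the cost of being non-constructive, whereas the paper's route exhibits explicit Krylov-space elements tied to FPCR, which is more suggestive of quantitative refinements. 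Two small points you should make explicit: expanding $w$ in the tensor system $\{\phi_{j,X}\otimes\phi_{j',Y}\}$ either uses its completeness (available here since all $\lambda_{j,X}$ and $\lambda_{j,Y}$ are assumed positive) or, more simply, one should replace $w$ by its projection onto the closed span of the tensor functions, which changes none of the relevant inner products because $\beta^*$ and every $\Gamma_{XX}^k(\beta^*)$ lie in that span by \eqref{eq:beta*}; and the regrouping by distinct eigenvalues at the end is justified precisely by the absolute convergence $\sum_{j,j'}|w_{jj'}a_{jj'}|/\lambda_{j,X}\leq\|w\|_2\|\beta^*\|_2<\infty$ that you invoke.
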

\begin{Remark}
	It is worth noting that
	\autoref{prop:conv.beta.p} is not a corollary of \citet[Theorem~3.2]{DelaigleHall2012b};
	the latter one merely implies an identity weaker than \autoref{prop:conv.beta.p}: 
	fixing arbitrary $t_0\in\mathbb{I}_Y$,
	univariate function 
	$\beta^*(\cdot, t_0)\in
	\overline{{\rm span}\{\Gamma_{XX}^j(\beta^*)(\cdot, t_0)\mid j\geq 1\}}$.
\end{Remark}

As an extension of the alternative PLS \citep[APLS,][designed for the scalar-on-function regression]{DelaigleHall2012b},
our proposal is abbreviated as fAPLS,
with letter ``f'' emphasizing its application to FoFR. 
The remaining portion of this paper is organized as below.
\autoref{sec:method} details two equivalent expressions of fAPLS estimators,
facilitating the empirical implementation and theoretical derivation, respectively.
In \autoref{sec:numerical} 
fAPLS is compared with competitors 
and is illustrated as a time-saving option.
The framework of fAPLS is potential to be extended to more complex settings,
e.g., correlated subjects and non-linear modelling;
we include three promising directions in \autoref{sec:discussion}.
More assumptions and proofs are relegated to Appendix for conciseness.

\section{Method}\label{sec:method}

We propose to project $\beta^*$ to \eqref{eq:krylov} and 
to utilize the least squares solution
\begin{equation}\label{eq:beta.p.fapls}
	\beta_{p,\rm fAPLS}
	= \argmin_{\beta\in {\rm KS}_p(\Gamma_{XX}, \beta^*)}\E\|Y-\mu_Y-\mathcal{L}_X(\beta)\|_2^2
	= [\Gamma_{XX}(\beta^*),\ldots,\Gamma_{XX}^p(\beta^*)]\bm H_p^{-1}\bm\alpha_p,
\end{equation}
where $\bm H_p=[h_{jj'}]_{1\leq j,j'\leq p}$ and $\bm\alpha_p=[\alpha_1,\ldots,\alpha_p]^\top$ 
denote $p\times p$ and $p\times 1$ matrices,
respectively, with
\begin{align}
	h_{jj'} 
		=&\ \int_{\mathbb I_Y}\left\{
			\int_{\mathbb I_X}\int_{\mathbb I_X}r_{XX}(s, w)\Gamma_{XX}^j(\beta^*)(s, t)\Gamma_{XX}^{j'}(\beta^*)(w, t)\dd s\dd w
		\right\}\dd t\notag
	\\
		=& \int_{\mathbb I_Y}\int_{\mathbb I_X}\Gamma_{XX}^j(\beta^*)(s, t)\Gamma_{XX}^{j'+1}(\beta^*)(s, t)\dd s\dd t,\label{eq:h.ij}
	\\
	\alpha_i 
		=&\ \int_{\mathbb I_Y}\left\{
			\int_{\mathbb I_X}\int_{\mathbb I_X}r_{XX}(s, w)\Gamma_{XX}^j(\beta^*)(s, t)\beta^*(w, t)\dd s\dd w
		\right\}\dd t\notag
	\\
		=&\ \int_{\mathbb I_Y}\int_{\mathbb I_X}\Gamma_{XX}(\beta^*)(s, t)\Gamma_{XX}^j(\beta^*)(s, t)\dd s\dd t.\notag
\end{align}
\autoref{prop:conv.beta.p} justifies \eqref{eq:beta.p.fapls} 
by entailing that $\lim_{p\to\infty}\|\beta_{p,\rm fAPLS}-\beta^*\|_2=0$,
which is crucial to the consistency of our estimators delivered later.

Suppose $n$ two-tuples $(X_i, Y_i)$, $1\leq i\leq n$,
are all independent realizations of $(X, Y)$.
Nobody is aware of the analytical expressions of these trajectories.
So it is impossible to compute corresponding integrals exactly.
Nevertheless,
numerical tools like quadrature rules are available and satisfactory,
as long as observed points at each curve are sufficiently dense.
Errors are introduced in these approximations.
Though they are bounded,
it is inevitable to assume smoothness of original trajectories;
see, e.g., \citet{Tasaki2009} for the trapezoidal rule.
To fulfill the requirement on smoothness, 
interpolations, e.g., various splines, are often involved;
refer to, e.g., \citet{Xiao2019} for theoretical results on certain penalized splines.
For convenience,
we assume curves to be observed densely enough and
abuse integral signs for corresponding empirical approximations.

It is natural to estimate $r_{XX}(s,s')$ 
and $r_{XY}(s,t)$ ($=\Gamma_{XX}(\beta^*)(s,t)$), 
$(s,s',t)\in\mathbb{I}_X\times\mathbb{I}_X\times\mathbb{I}_Y$,
respectively, by
\begin{align}
    \hat r_{XX}(s,s') &= \frac{1}{n}\sum_{i=1}^{n}X_i^{\rm cent}(s)X_i^{\rm cent}(s')
    \label{eq:r.xx.hat}\\
	\hat r_{XY}(s,t) &= \frac{1}{n}\sum_{i=1}^{n}X_i^{\rm cent}(s)Y_i^{\rm cent}(t)
		(=\widehat\Gamma_{XX}(\beta^*)(s,t))
	\label{eq:r.xy.hat}
\end{align}
in which $X_i^{\rm cent}=X_i-\bar X$ and $Y_i^{\rm cent}=Y_i-\bar Y$,
with $\bar X=n^{-1}\sum_{i=1}^{n}X_i$ and $\bar Y=n^{-1}\sum_{i=1}^{n}Y_i$.
Given $\widehat\Gamma_{XX}^j(\beta^*)$,
one can estimate $\Gamma_{XX}^{j+1}(\beta^*)(s,t)$ by
\begin{equation}\label{eq:gamma.xx.beta.i}
	\widehat\Gamma_{XX}^{i+1}(\beta^*)(s,t)
	=\int_{\mathbb I_X}\hat r_{XX}(s,s')\widehat\Gamma_{XX}^j(\beta^*)(s',t)\dd s'.
\end{equation}
Plugging \eqref{eq:r.xx.hat}, \eqref{eq:r.xy.hat} and \eqref{eq:gamma.xx.beta.i} 
all into \eqref{eq:beta.p.fapls},
an estimator for both $\beta_{p,\rm fAPLS}$ and $\beta^*$ comes:
\begin{equation}\label{eq:beta.p.fapls.hat}
	\hat\beta_{p,\rm fAPLS}
	= [\widehat\Gamma_{XX}(\beta^*),\ldots,\widehat\Gamma_{XX}^p(\beta^*)]\widehat{\bm H}_p^{-1}\widehat{\bm\alpha}_p,
\end{equation}
where $\widehat{\bm H}_p=[\hat h_{jj'}]_{1\leq j,j'\leq p}$ and $\widehat{\bm\alpha}_p=[\hat\alpha_1,\ldots,\hat\alpha_p]^\top$
are respectively consisting of
\begin{align}
	\hat h_{jj'} &= \int_{\mathbb I_Y}\int_{\mathbb I_X}
			\widehat\Gamma_{XX}^j(\beta^*)(s, t)\widehat\Gamma_{XX}^{j'+1}(\beta^*)(s, t)
		\dd s\dd t,\label{eq:h.ij.hat}
	\\
	\hat\alpha_j &= \int_{\mathbb I_Y}\int_{\mathbb I_X}
			\widehat\Gamma_{XX}(\beta^*)(s, t)\widehat\Gamma_{XX}^j(\beta^*)(s, t)
		\dd s\dd t.\notag
\end{align}
Finally,
given trajectory $X_0\sim X$ and $t\in\mathbb I_Y$,
\begin{equation}\label{eq:g}
	g(X_0)(t) = \E\{Y(t)\mid X=X_0\} = \mu_Y(t) + \mathcal L_{X_0}(\beta^*)(t)
\end{equation}
is predicted by
\begin{equation}\label{eq:g.p.fapls.hat}
	\hat g_{p,\rm fAPLS}(X_0)(t) = \bar Y(t) + \int_{\mathbb I_X}X_0^{\rm cent}(s)\hat\beta_{p,\rm fAPLS}(s,t)\dd s.
\end{equation}

$\widehat{\bm H}$ at \eqref{eq:beta.p.fapls.hat} is always invertible if we were able to work in exact arithmetic.
But it is not the case for finite precision arithmetic:
as $p$ increases,
the linear system from $\widehat\Gamma_{XX}(\beta^*),\ldots,\widehat\Gamma_{XX}^p(\beta^*)$ may be close to singular.
To overcome this numerical difficulty,
as suggested by \citet[Section~4.2]{DelaigleHall2012b},
we orthonormalize $\widehat\Gamma_{XX}(\beta^*),\ldots,\widehat\Gamma_{XX}^p(\beta^*)$ (with respect to $\hat r_{XX}$)
into $\hat\psi_1,\ldots,\hat\psi_p$ (see \autoref{alg:mgs} or \citealt[pp.~102]{Lange2010})
and reformulate the optimization problem at \eqref{eq:beta.p.fapls} into the empirical version:
\begin{equation}\label{eq:max.stable}
	\max_{[c_1,\ldots,c_p]^\top\in\mathbb R^p}
	\frac{1}{n}\sum_{i=1}^{n}\int_{\mathbb I_Y}\left\{
		Y_i(t)-\bar Y(t)-\sum_{j=1}^p c_j\int_{\mathbb I_X}X_i^{\rm cent}(s)\hat\psi_j(s,t)\dd s
	\right\}^2\dd t.
\end{equation}
We then reach a numerically stabilized estimator for $\beta^*$:
\begin{equation}\label{eq:beta.p.fapls.tilde}
	\tilde\beta_{p,\rm fAPLS}
	= [\hat\psi_1,\ldots,\hat\psi_p][\hat\gamma_1,\ldots,\hat\gamma_p]^\top
	=\sum_{j=1}^p\hat\gamma_j\hat\psi_j,
\end{equation}
where $[\hat\gamma_1,\ldots,\hat\gamma_p]^\top$ is the maximizer of \eqref{eq:max.stable},
with 
$$
	\hat\gamma_j=\int_{\mathbb I_Y}\int_{\mathbb I_X}\hat r_{XY}(s, t)\hat\psi_j(s, t)\dd s\dd t.
$$
A prediction for $g(X_0)$ at \eqref{eq:g}, 
alternative to $\hat g_{p,\rm fAPLS}(X_0)$ at \eqref{eq:g.p.fapls.hat}, 
is thus given by
\begin{equation}\label{eq:g.p.fapls.tilde}
	\tilde g_{p,\rm fAPLS}(X_0)(t) 
	= \bar Y(t) + \int_{\mathbb I_X}X_0^{\rm cent}(s)\tilde\beta_{p,\rm fAPLS}(s,t)\dd s.
\end{equation}

It is worth emphasizing that,
in exact arithmetic,
$\hat\beta_{p,\rm fAPLS}$ at \eqref{eq:beta.p.fapls.hat}
(resp. $\hat g_{p,\rm fAPLS}$ at \eqref{eq:g.p.fapls.hat}) 
is identical to $\tilde\beta_{p,\rm fAPLS}$ at \eqref{eq:beta.p.fapls.tilde}
(resp. $\tilde g_{p,\rm fAPLS}$ at \eqref{eq:g.p.fapls.tilde}),
because $\{\widehat\Gamma_{XX}^j(\beta^*)\mid 1\leq j\leq p\}$
and $\{\hat\psi_j\mid 1\leq j\leq p\}$
literally span the same space.
Nevertheless,
in practice $\tilde\beta_{p,\rm fAPLS}$ and $\tilde g_{p,\rm fAPLS}$ stand out 
due to their numerical stability for finite precision arithmetic,
whereas the more explicit expressions of $\hat\beta_{p,\rm fAPLS}$ and $\hat g_{p,\rm fAPLS}$
make themselves preferred in theoretical derivations.

\begin{algorithm}[t!]
	\caption{Modified Gram-Schmidt orthonormalization with respect to $\hat r_{XX}$}
	\label{alg:mgs}
	\begin{algorithmic}[]
		\For {$j$ in $1,\ldots,p$}
		    \State $\hat\psi_j^{[1]}\gets\widehat\Gamma_{XX}^j(\beta^*)$.
		    \If {$j\geq 2$}
    		    \For {$j'$ in $1,\ldots,j-1$}
    		        \State $
    		        	\hat\psi_j^{[j'+1]}\gets
    		            	\hat\psi_j^{[j']}-
    		                \left\{
    		                	\int_{\mathbb I_Y}\int_{\mathbb I_X}\int_{\mathbb I_X}
    		                		\hat r_{XX}(s,s')\hat\psi_j^{[j']}(s,t)\hat\psi_{j'}(s',t)
    		                	\dd s\dd s'\dd t
    		       			\right\}\hat\psi_{j'}
    		       		$.
    		    \EndFor
    		\EndIf
		    \State $\hat\psi_j\gets
					\left\{
	                	\int_{\mathbb I_Y}\int_{\mathbb I_X}\int_{\mathbb I_X}
	                		\hat r_{XX}(s,s')\hat\psi_j^{[j]}(s,t)\hat\psi_j^{[j]}(s',t)
	                	\dd s\dd s'\dd t
	       			\right\}^{-1/2}\hat\psi_j^{[j]}
		    	$.
		\EndFor
	\end{algorithmic}
\end{algorithm}

We have one hyper-parameter to tune.
Using five-fold cross-validation,
$p$ is chosen as the minimizer of
$$
    {\rm CV}(p)=\frac{1}{5}\sum_{k=1}^{5}
    	\frac{\sum_{i\in I_k}\|Y_i-\tilde g_{p,\rm fAPLS}^{(-k)}(X_i)\|_2^2}
        	{\sum_{i\in I_k}\|Y_i-\sum_{i\in I_{\rm test}\setminus I_k}Y_i/(\# I_{\rm test}-\# I_k)\|_2^2},
$$
where $\{I_1,\ldots, I_5\}$ is a partition of index set for testing,
say $I_{\rm test}$;
where \# represents the cardinality;
where $\tilde g_{p,\rm fAPLS}^{(-k)}(X_i)$ predicts $g(X_i)$ 
and is constructed from data points corresponding to $I_{\rm test}\setminus I_k$.
Define the fraction of variance explained (FVE) as 
${\rm FVE}(p)=\sum_{j=1}^p\lambda_{j,X}/\sum_{j=1}^\infty\lambda_{j,X}$; then
the search for $p$ is limited within $[1,p_{\max}]$,
where $p_{\max}$ is set to be the smallest integer such that 
${\rm FVE}(p_{\max})$ exceeds a pre-determined close-to-one threshold, e.g., 99\%. 
This FVE criterion is commonly used in truncating Karhunen-Lo\`eve series,
e.g., FPCR.
Since FPLS algorithms are typically more parsimonious
than FPCR in terms of number of basis functions,
$p_{\max}$ formed in this way tends to be reasonable.

\subsection{Asymptotic properties}

Under regularity conditions,
\autoref{prop:conv.beta.p.hat} 
(resp. \autoref{prop:conv.g.p.hat})
verifies the consistency in $L_2$ and/or supremum metric
(in probability)
of $\hat\beta_{p,\rm fAPLS}$
(resp. $\hat g_{p,\rm fAPLS}(X_0)$).
In these results,
we allow $p$ to diverge as a function of $n$,
but its rate is capped to be at most $O(\sqrt{n})$ if $\|r_{XX}\|_2<1$
and even slower otherwise.
More discussion of the technical assumptions may be found at the beginning of Appendix.

\begin{Proposition}\label{prop:conv.beta.p.hat}
    Holding \ref{cond:identifiable}--\ref{cond:diverge.p.add.L2},
    as $n$ diverges,
    $\|\hat{\beta}_{p,\rm fAPLS}-\beta^*\|_2=o_p(1)$.
    If upgrade \ref{cond:diverge.p.add.L2} to \ref{cond:diverge.p.add.sup},
    then the convergence becomes uniform, 
    i.e.,
    $\|\hat{\beta}_{p,\rm fAPLS}-\beta^*\|_\infty=o_p(1)$,
    with $\|\cdot\|_\infty$ denoting the supremum metric.
\end{Proposition}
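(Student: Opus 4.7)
The plan is to split the error by the triangle inequality
$$
\|\hat\beta_{p,\rm fAPLS}-\beta^*\|_2\ \leq\ \|\hat\beta_{p,\rm fAPLS}-\beta_{p,\rm fAPLS}\|_2+\|\beta_{p,\rm fAPLS}-\beta^*\|_2,
$$
so that the deterministic approximation error on the right is handled directly by \autoref{prop:conv.beta.p} (which guarantees $\beta_{p,\rm fAPLS}\to\beta^*$ in $L_2$ as $p\to\infty$), and only the stochastic estimation error $\|\hat\beta_{p,\rm fAPLS}-\beta_{p,\rm fAPLS}\|_2$ needs serious work. Writing $\bm\Phi_p=[\Gamma_{XX}(\beta^*),\ldots,\Gamma_{XX}^p(\beta^*)]$ and $\widehat{\bm\Phi}_p$ for its sample analogue, I would use the identity
$$
\hat\beta_{p,\rm fAPLS}-\beta_{p,\rm fAPLS}
=(\widehat{\bm\Phi}_p-\bm\Phi_p)\widehat{\bm H}_p^{-1}\widehat{\bm\alpha}_p
+\bm\Phi_p(\widehat{\bm H}_p^{-1}-\bm H_p^{-1})\widehat{\bm\alpha}_p
+\bm\Phi_p\bm H_p^{-1}(\widehat{\bm\alpha}_p-\bm\alpha_p),
$$
so the analysis is reduced to controlling three ingredients: the iterate perturbations $\widehat\Gamma_{XX}^{\,j}(\beta^*)-\Gamma_{XX}^{\,j}(\beta^*)$, the Gram-matrix perturbation $\widehat{\bm H}_p-\bm H_p$ (together with $\widehat{\bm H}_p^{-1}$), and the vector perturbation $\widehat{\bm\alpha}_p-\bm\alpha_p$.

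For the iterate perturbations I would start from the standard rate $\|\hat r_{XX}-r_{XX}\|_2=O_p(n^{-1/2})$ and $\|\hat r_{XY}-r_{XY}\|_2=O_p(n^{-1/2})$, then iterate the telescoping identity
$$
\widehat\Gamma_{XX}^{\,j}(\beta^*)-\Gamma_{XX}^{\,j}(\beta^*)
=\sum_{k=1}^{j}\widehat\Gamma_{XX}^{\,k-1}\bigl(\widehat\Gamma_{XX}-\Gamma_{XX}\bigr)\Gamma_{XX}^{\,j-k}(\beta^*).
$$
When $\|r_{XX}\|_2<1$ the operator powers are uniformly bounded, so this yields $\max_{1\le j\le p}\|\widehat\Gamma_{XX}^{\,j}(\beta^*)-\Gamma_{XX}^{\,j}(\beta^*)\|_2=O_p(p\,n^{-1/2})$; without this norm restriction the bound acquires an extra $\|r_{XX}\|_2^{\,p}$ factor, which explains the stricter cap on the growth rate of $p$ imposed by \ref{cond:diverge.p.add.L2}. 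Applying Cauchy--Schwarz to the integrals defining $h_{jj'}$ and $\alpha_j$ then gives entrywise bounds on $\widehat{\bm H}_p-\bm H_p$ and $\widehat{\bm\alpha}_p-\bm\alpha_p$, which translate into Frobenius-norm bounds of order $p\cdot O_p(p\,n^{-1/2})$.

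The hard part will be bounding $\|\widehat{\bm H}_p^{-1}\|$: as $p$ grows, the iterates $\Gamma_{XX}^{\,j}(\beta^*)$ become nearly collinear under the $r_{XX}$-inner product (they cluster along the top eigen-directions of $R_{XX}$), so the smallest eigenvalue of $\bm H_p$ decays and the sampling error $\widehat{\bm H}_p-\bm H_p$ can destroy invertibility unless $p$ grows slowly enough. I would combine a lower bound on $\lambda_{\min}(\bm H_p)$ (furnished by the regularity assumptions, which presumably quantify the tail decay of $\lambda_{j,X}$ and the alignment of $\beta^*$ with the eigenbasis) with a standard perturbation identity $\widehat{\bm H}_p^{-1}-\bm H_p^{-1}=-\widehat{\bm H}_p^{-1}(\widehat{\bm H}_p-\bm H_p)\bm H_p^{-1}$, followed by a Neumann-series argument valid on the event $\|\widehat{\bm H}_p-\bm H_p\|\,\|\bm H_p^{-1}\|<1/2$. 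Under \ref{cond:diverge.p.add.L2} this event has probability tending to one, and the product of the three bounds in the decomposition then collapses to $o_p(1)$, yielding the $L_2$ consistency.

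For the uniform version under \ref{cond:diverge.p.add.sup}, the skeleton is unchanged but every $L_2$-norm on the function-valued pieces must be replaced by $\|\cdot\|_\infty$. This requires upgrading $\|\hat r_{XX}-r_{XX}\|_\infty=o_p(1)$ (at a suitable rate) and a supremum bound on the iterates $\Gamma_{XX}^{\,j}(\beta^*)$, both of which are standard consequences of the added smoothness/bandwidth conditions that presumably comprise the sup-norm-specific assumption. Propagating these sup-norm estimates through the same telescoping identity and through the decomposition above, while leaving the matrix-perturbation analysis of $\widehat{\bm H}_p^{-1}$ untouched, delivers $\|\hat\beta_{p,\rm fAPLS}-\beta^*\|_\infty=o_p(1)$.
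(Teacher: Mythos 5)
Your proposal follows essentially the same route as the paper's proof: the same triangle-inequality split with \autoref{prop:conv.beta.p} absorbing the approximation bias, the same recursive/telescoping control of $\widehat\Gamma_{XX}^j(\beta^*)-\Gamma_{XX}^j(\beta^*)$ built from the $O_p(n^{-1/2})$ rates for $\hat r_{XX}$ and $\hat r_{XY}$, Cauchy--Schwarz entrywise bounds on $\widehat{\bm H}_p-\bm H_p$ and $\widehat{\bm\alpha}_p-\bm\alpha_p$, and a Neumann-series perturbation of $\widehat{\bm H}_p^{-1}$ valid on the event $\tau_p^{-1}\|\widehat{\bm H}_p-\bm H_p\|_2<1$, with \ref{cond:diverge.p.add.L2} (resp.\ \ref{cond:diverge.p.add.sup}) making all products $o_p(1)$. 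The only minor differences are cosmetic: you use a three-term decomposition where the paper uses a two-term one, and the paper does not derive a lower bound on $\lambda_{\min}(\bm H_p)$ but instead states its conditions directly in terms of $\tau_p$, with \ref{cond:diverge.p.add.sup} also assuming $\|\beta_{p,\rm fAPLS}-\beta^*\|_\infty\to 0$ outright since \autoref{prop:conv.beta.p} only yields $L_2$ convergence of the bias term.
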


\begin{Proposition}\label{prop:conv.g.p.hat}
    Given $X_0\sim X$, 
    conditions \ref{cond:identifiable}--\ref{cond:diverge.p.add.L2}
    suffice for the zero-convergence (in probability) of $\|\hat g_{p,\rm fAPLS}(X_0)-g(X_0)\|_2$
    (i.e., $\|\hat g_{p,\rm fAPLS}(X_0)-g(X_0)\|_2=o_p(1)$),
    while the uniform version
    (viz. $\|\hat g_{p,\rm fAPLS}(X_0)-g(X_0)\|_\infty=o_p(1)$)
    is entailed jointly by \ref{cond:identifiable}--\ref{cond:inf.norm} 
    and \ref{cond:diverge.p.add.sup}--\ref{cond:moment.add.sup}.
\end{Proposition}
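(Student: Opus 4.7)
The plan is to decompose the prediction error into three additive pieces and control each one using Proposition \ref{prop:conv.beta.p.hat} together with standard limit theorems for i.i.d.\ $L_2$-valued samples. Writing $X_0-\bar X=(X_0-\mu_X)+(\mu_X-\bar X)$ and adding and subtracting $\int_{\mathbb I_X}\{X_0(s)-\bar X(s)\}\beta^*(s,t)\,ds$, one obtains
\begin{align*}
\hat g_{p,\rm fAPLS}(X_0)(t)-g(X_0)(t)
&=\underbrace{\{\bar Y(t)-\mu_Y(t)\}}_{(A)}\\
&\quad+\underbrace{\int_{\mathbb I_X}\{X_0(s)-\bar X(s)\}\{\hat\beta_{p,\rm fAPLS}(s,t)-\beta^*(s,t)\}\,\dd s}_{(B)}\\
&\quad+\underbrace{\int_{\mathbb I_X}\{\mu_X(s)-\bar X(s)\}\beta^*(s,t)\,\dd s}_{(C)}.
\end{align*}

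For the $L_2$ assertion under \ref{cond:identifiable}--\ref{cond:diverge.p.add.L2}, $\|(A)\|_2=O_p(n^{-1/2})$ follows from the law of large numbers for i.i.d.\ $L_2$-processes. Applying the Cauchy--Schwarz inequality in $s$ pointwise in $t$ and integrating over $\mathbb I_Y$ bounds $\|(B)\|_2$ by $\|X_0-\bar X\|_2\cdot\|\hat\beta_{p,\rm fAPLS}-\beta^*\|_2$: the first factor is $O_p(1)$ because $\|X_0-\mu_X\|_2=O_p(1)$ for $X_0\sim X$ and $\|\bar X-\mu_X\|_2=o_p(1)$, while the second is $o_p(1)$ by Proposition \ref{prop:conv.beta.p.hat}. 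Analogously, $\|(C)\|_2\le\|\bar X-\mu_X\|_2\,\|\beta^*\|_2=o_p(1)$ since $\beta^*\in L_2$ under \ref{cond:identifiable}. Summing the three rates yields $\|\hat g_{p,\rm fAPLS}(X_0)-g(X_0)\|_2=o_p(1)$.

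The uniform claim uses the same decomposition with each piece upgraded to sup-norm. Under \ref{cond:inf.norm} and \ref{cond:moment.add.sup}, both $\|\bar Y-\mu_Y\|_\infty$ and $\|\bar X-\mu_X\|_\infty$ are $o_p(1)$ by a uniform law of large numbers for continuous processes on a compact index set. For $(B)$, the elementary bound $|\int_{\mathbb I_X}f(s)h(s,t)\,\dd s|\le|\mathbb I_X|^{1/2}\|f\|_2\,\|h\|_\infty$ gives
$$
\|(B)\|_\infty\le|\mathbb I_X|^{1/2}\,\|X_0-\bar X\|_2\,\|\hat\beta_{p,\rm fAPLS}-\beta^*\|_\infty=o_p(1)
$$
by the uniform half of Proposition \ref{prop:conv.beta.p.hat} (invoked under \ref{cond:diverge.p.add.sup}), and $(C)$ is handled the same way using $\|\beta^*\|_\infty<\infty$ supplied by \ref{cond:inf.norm}.

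The main obstacle is the cross term $(B)$, which couples the estimation error $\hat\beta_{p,\rm fAPLS}-\beta^*$ to the sample-mean-centered predictor $X_0-\bar X$ rather than the idealized $X_0-\mu_X$. In $L_2$ this is routine via the triangle inequality on $\|X_0-\bar X\|_2$, but for the sup-norm statement one must verify that the growth cap on $p$ (at most $O(\sqrt n)$ when $\|r_{XX}\|_2<1$) keeps $\|\hat\beta_{p,\rm fAPLS}-\beta^*\|_\infty$ tight against the factor $\|X_0-\bar X\|_2$; this is precisely the role played by the strengthened regularity and growth conditions \ref{cond:diverge.p.add.sup}--\ref{cond:moment.add.sup}.
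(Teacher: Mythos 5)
Your proposal is correct and follows essentially the same route as the paper: the identical three-term decomposition (sample mean of $Y$, the $\beta^*$-weighted term in $\bar X-\mu_X$, and the cross term coupling $X_0-\bar X$ with $\hat\beta_{p,\rm fAPLS}-\beta^*$), bounded via Cauchy--Schwarz, the Hilbert-space law of large numbers for $\|\bar X-\mu_X\|_2$ and $\|\bar Y-\mu_Y\|_2$, the uniform law of large numbers under \ref{cond:moment.add.sup} for $\|\bar Y-\mu_Y\|_\infty$, and \autoref{prop:conv.beta.p.hat} for the slope error. The only quibble is your incidental claim that $\|\bar X-\mu_X\|_\infty=o_p(1)$ (not implied by the stated conditions, which control $Y$ only), but it is harmless since, as in the paper, term $(C)$ only requires $\|\bar X-\mu_X\|_2$ together with $\|\beta^*\|_\infty<\infty$.
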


\section{Numerical study}\label{sec:numerical}

Our proposal fAPLS was compared with competitors 
in terms of 
the relative integrated squared estimation error (ReISEE)
and/or
relative integrated squared prediction error (ReISPE):
\begin{align*}
	{\rm ReISEE}&=
	\frac{
	        \|\beta^*-\hat\beta\|_2^2
	    }{
	        \|\beta^*\|_2^2
	    },
	\\
	{\rm ReISPE}&=
	\frac{
	        \sum_{i\in I_{\rm test}}\|Y_i-\hat Y_i\|_2^2
	    }{
	        \sum_{i\in I_{\rm test}}\|Y_i-\sum_{i\in I_{\rm train}}Y_i/\#I_{\rm train}\|_2^2
	    },
\end{align*}
where $\hat\beta$ estimates $\beta$ and $\hat{Y}_i$ predicts $Y_i$, $1\leq i\leq n$;
where \# represents the cardinality,
and $I_{\rm train}$ is the index set for training.
We reported the averages and standard deviations of ReISEEs and ReISPEs
in \autoref{tab:error} for all the numerical studies.
Subsequent comparisons involved other FPLS routes for FoFR,
including SigComp \citep{LuoQi2017} and
(functional) NIPALS and SIMPLS \citep{BeyaztasShang2020}.
These three routes seemed superior to quite a few competitors in literature.
We referred to their original source codes posted at
\texttt{R} package \texttt{FRegSigCom} \citep{R-FRegSigCom} and
GitHub (\url{https://github.com/hanshang/FPLSR}; accessed on \today),
respectively.
Code trunks for our implementation are already available too at GitHub
(\url{https://github.com/ZhiyangGeeZhou/fAPLS}; accessed on \today).

\subsection{Simulation}

In total we went through three simulation scenarios. 
They varied from each other 
on the settings of $\mu_Y$, $X$ and $\beta^*$ 
(as specified later)
but shared the identical setup for error term $\varepsilon=\varepsilon(t)$
which was a zero-mean Gaussian process with covariance function
$\E\{\varepsilon(t),\varepsilon(t')\}=\sigma_\varepsilon^2\rho^{|t-t'|}$,
$t,t'\in [0,1]$ ($=\mathbb{I}_X=\mathbb{I}_Y$ in simulation).
Given $\mu_Y$, $X$ and $\beta^*$,
parameters $\rho$ and $\sigma_\varepsilon^2$ determined the signal-noise-ratio (SNR),
viz. the ratio between 
$[\int_{\mathbb I_Y}\var\{\mathcal{L}_X(\beta^*)(t)\}\dd t]^{1/2}$
and $[\int_{\mathbb I_Y}\var\{\varepsilon(t)\}\dd t]^{1/2}$.
$\rho$ took either 0.1 (low autocorrelation) or 0.9 (high autocorrelation),
while two levels of $\sigma_\varepsilon^2$ were set up
so that SNR was moderate and fell between roughly 1 and 10;
see Tables \ref{tab:error} and \ref{tab:ancillary} for specific values of $\rho$ and $\sigma_\varepsilon^2$.
In each scenario,
we generated $n$ ($=300$) independent and identically distributed (i.i.d.)
pairs of trajectories 
(with 80\% kept for training and 20\% for testing).
Each curve was recorded 
at 101 equally spaced points in $[0,1]$,
specifically, $\{0, 1/101, \ldots, 100/101, 1\}$.
We repeated this procedure 50 times
for each combination of $\mu_Y$, $X$, $\beta^*$, $\rho$ and $\sigma_\varepsilon^2$.

\subsubsection{Simulation 1}\label{sec:simu.1}

Assume $\mu_Y=0$.
We took 100, 10 and 1
as the top three eigenvalues of $\Gamma_{XX}$,
whereas $\lambda_{j,X}=0$ for all $j\geq 4$.
Correspondingly,
the first three eigenfunctions of $\Gamma_{XX}$
were respectively set to be
(normalized) shifted Legendre polynomials of order 2 to 4
\citep[say $P_2$, $P_3$ and $P_4$; see][pp.~773--774]{Hochstrasser1972},
viz.
\begin{align*}
    \phi_{1,X}(s) &= P_2(s)= \sqrt{5}(6s^2 - 6s + 1), \\
    \phi_{2,X}(s) &= P_3(s)= \sqrt{7}(20s^3 - 30s^2 + 12s - 1), \\
    \phi_{3,X}(s) &= P_4(s)= 3(70s^4 - 140s^3 + 90s^2 - 20s + 1).
\end{align*}
As is known, 
they are of unit norm and mutually orthogonal on $[0,1]$.
The predictors and slope function were respectively given by
\begin{align*}
	X_i(s) &= \zeta_{i1}P_2(s) + \zeta_{i2}P_3(s) + \zeta_{i3}P_4(s),
	\\
	\beta^*(s,t) &= P_2(s)P_2(t)+P_3(s)P_3(t)+P_4(s)P_4(t),
\end{align*}
with $\zeta_{ij}$ independently distributed as $\mathcal{N}(0, \lambda_{j,X})$,
$j=1,\ldots,3$.

\nameref{sec:simu.1} was equipped with a true coefficient 
belonging to 
${\rm KS}_3(\Gamma_{XX}, \beta^*)$
and was in favor of our proposal.
As expected, 
fAPLS enjoyed lower estimation errors for this scenario;
see \autoref{tab:error}.
Nevertheless,
as for prediction errors,
the outputs from all the four methods were fairly comparable.
We speculated that
their extra estimation bias fell outside 
the range of $\Gamma_{XX}$
($=\{\Gamma_{XX}(f)\mid f\in L_2(\mathbb{I}_X\times\mathbb{I}_Y)\}$)
and hence impacted little on prediction.
ReISEEs of all methods
changed little with $\rho$ and $\sigma_\varepsilon^2$,
while their prediction accuracy was sensitive to $\sigma_\varepsilon^2$:
as $\sigma_\varepsilon^2$ became smaller,
prediction errors were all lowered.
Meanwhile,
four FPLS routes all chose around two components.
The biggest advantage of fAPLS was on the running time;
it ran faster than the other three in numerical studies;
see \autoref{tab:ancillary}.
This phenomenon was not surprising,
because, compared with others,
fAPLS involves no eigendecomposition and fewer tuning parameters.

\subsubsection{Simulation 2}\label{sec:simu.2}

Define two covariance functions as follows:
\begin{align*}
	\SSigma_1 &=\SSigma_1(s,s')=\exp\{-(10|s-s'|)^2\},
	\\
	\SSigma_2 &=\SSigma_2(s,s')=\{1+20|s-s'|+\frac{1}{3}(20|s-s'|)^2\}\exp(-20|s-s'|).
\end{align*}
Then generate $\zeta_1,\ldots,\zeta_7$ as i.i.d. realizations of 
the zero-mean Gaussian process with covariance function $\SSigma_2$.
Fixing $\zeta_1,\ldots,\zeta_7$ for this scenario,
we constructed
\begin{align*}
	\mu_Y(t) &= \zeta_1(t),
	\\
	\beta^*(s,t) &= \zeta_2(s)\zeta_3(t) + \zeta_4(s)\zeta_5(t) + \zeta_6(s)\zeta_7(t).
\end{align*}
Our setup is finished by sampling $X_i$, $1\leq i\leq 300$,
from the zero-mean Gaussian process with covariance function $\SSigma_1$.
This setting appeared too in \citet[Section~4.1.1]{LuoQi2017}.

The performance of four approaches was analogous to that in \nameref{sec:simu.1}:
fAPLS stood out in terms of estimation accuracy,
while prediction errors from all routes were pretty close.
Though the four methods shared the identical search scope for number of components,
models from fAPLS and SigComp were typically more parsimonious (viz. of fewer numbers of components)
than the remaining two.
Especially,
when there was more noise (viz. $\sigma_\varepsilon^2=80$) in \nameref{sec:simu.2},
fAPLS built up most concise models with little loss in estimation and prediction accuracy;
see \autoref{tab:ancillary}.

\subsubsection{Simulation 3}

We considered functional predictors and coefficient
functions similar to those in
\citet[Section~4.1]{GoldsmithBobCrainiceanuCaffoReich2011},
\citet[Section~4.1]{IvanescuStaicuScheiplGreven2015}
and \citet[Section~4.1.2]{LuoQi2017}:
\begin{align*}
	\mu_Y(t) &= 2\exp\{-(t-1)^2\},
	\\
	\beta^*(s,t) &= \sin(\pi s)\cos(2\pi t),
	\\
	X_i(s) &= \sum_{m=1}^{10}\frac{1}{m^2}\{\zeta_{i1m}\sin(m\pi s)+\zeta_{i2m}\cos(m\pi s)\},
\end{align*}
where $\zeta_{ijm}$, $1\leq i\leq 300$, $1\leq j\leq 2$, $1\leq m\leq 10$,
are all i.i.d. standard normal.
This was a scenario where SIMPLS worked generally better than others in estimation,
while fAPLS was competitive as long as 
either $\rho$ or $\sigma_\varepsilon^2$ was not at the low level. 
In terms of prediction,
all the methods were still of similar accuracy.
Number of components picked up by SigComp were in average about 0.5 fewer than those for fAPLS,
whereas the latter one ran significantly faster;
see \autoref{tab:ancillary}.

\subsection{Application}

Revisit the two datasets described in \autoref{sec:intro}.
For DTI (resp. BG) data,
we took CCA FA tract profiles (resp. hip angle curves) 
illustrated at \autoref{fig:cca} (resp. \autoref{fig:hip}) as predictors 
and RCST FA tract profiles (resp. knee angle curves) 
illustrated at \autoref{fig:rcst} (resp. \autoref{fig:knee}) as responses.
For each dataset,
repeat the following random split for 50 times:
take roughly 20\% of all subjects
for testing and the remaining for training.
We thus generated 50 ReISPEs for each dataset and each approach.

\begin{figure}[!t]
	\centering
	\begin{subfigure}{.49\textwidth}
		\centering
		\includegraphics[width=.7\textwidth, height=.25\textheight]
			{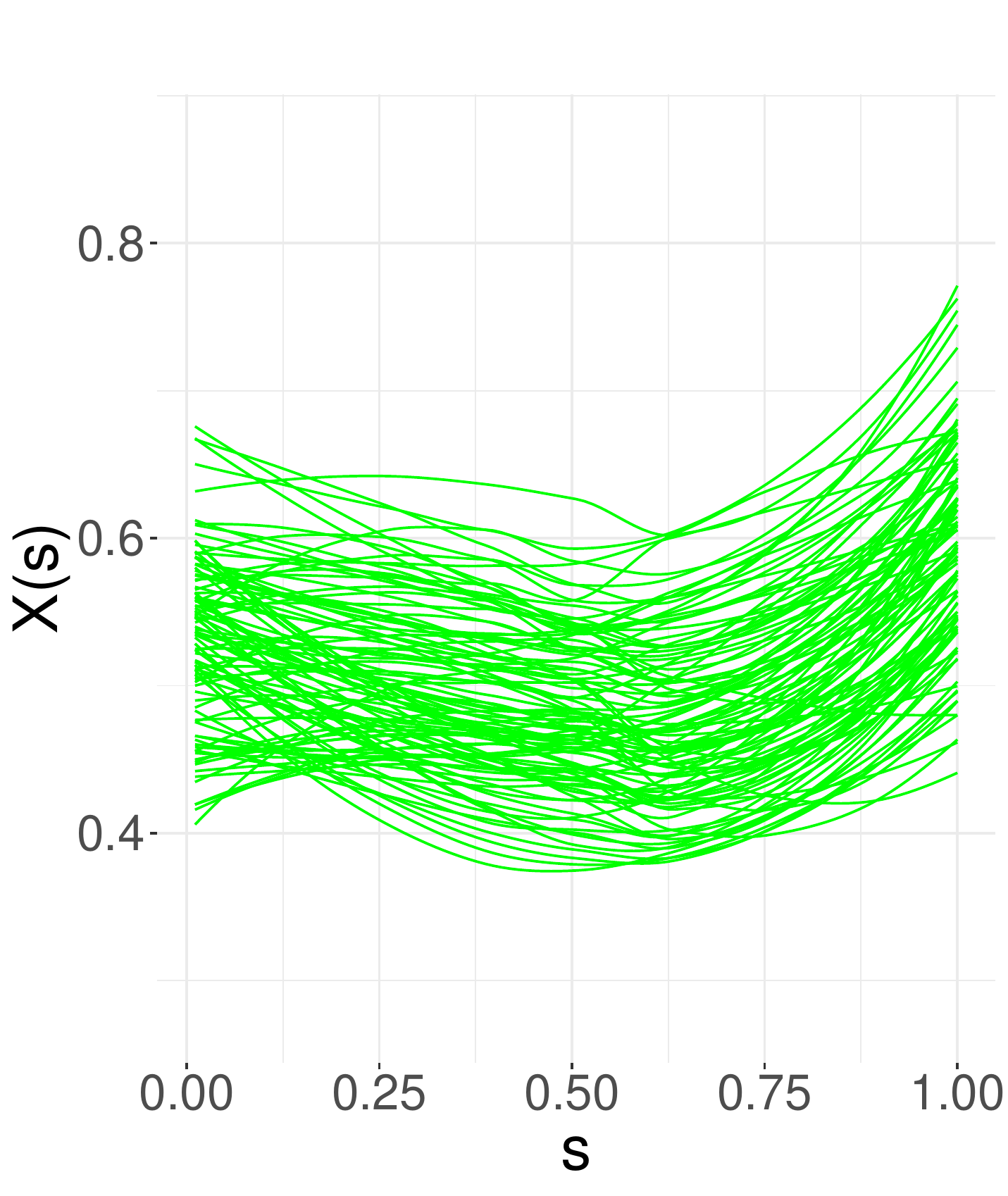}
		\caption{CCA FA tract profile ($X$)}
		\label{fig:cca}
	\end{subfigure}
	\begin{subfigure}{.49\textwidth}
		\centering
		\includegraphics[width=.7\textwidth, height=.25\textheight]
			{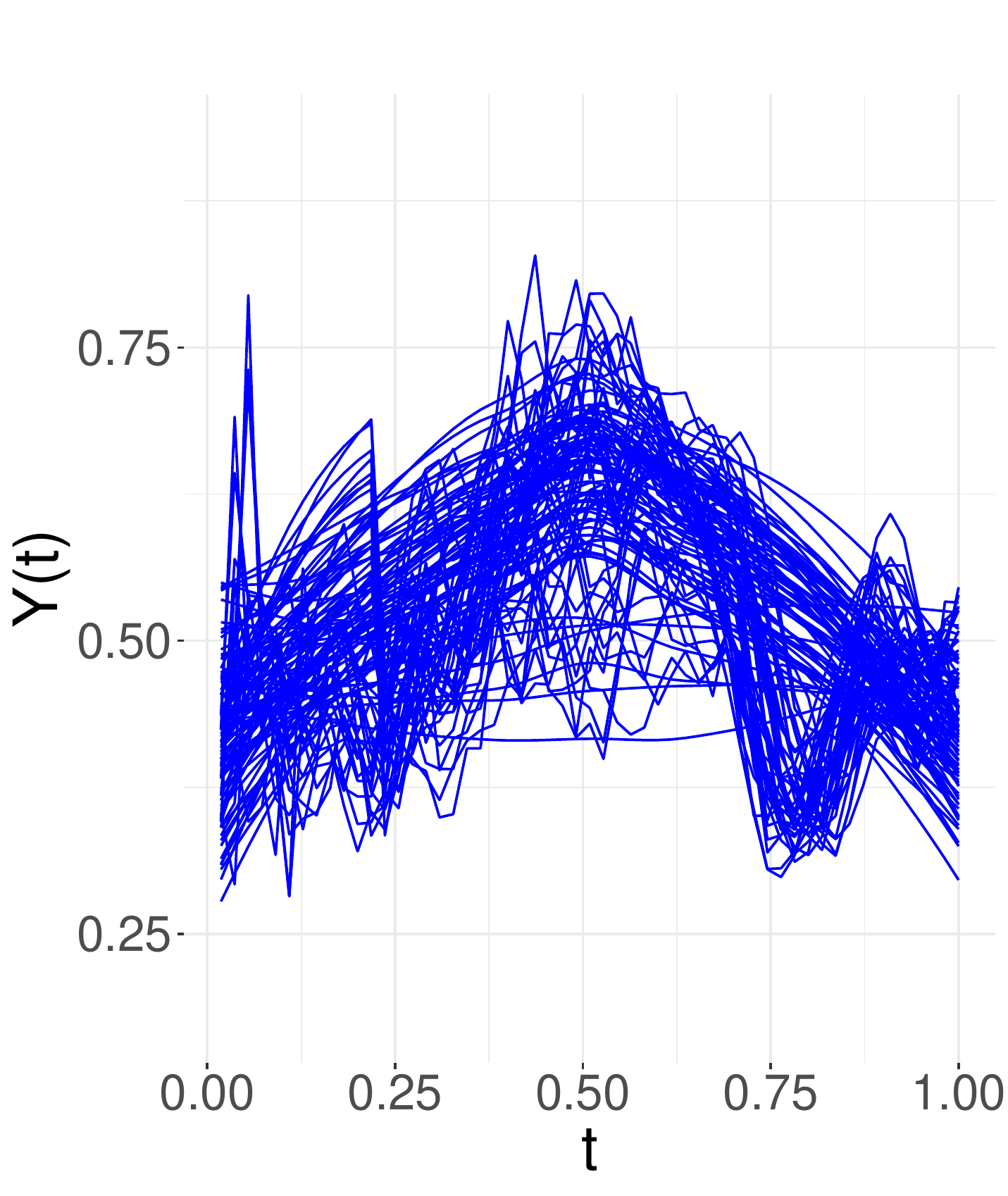}
		\caption{RCST FA tract profile ($Y$)}
		\label{fig:rcst}
	\end{subfigure}
	\caption{The first 100 pairs of CCA and RCST FA tract profiles.}
	\label{fig:dti}
\end{figure}

\begin{figure}[!t]
	\centering
	\begin{subfigure}{.49\textwidth}
		\centering
		\includegraphics[width=.7\textwidth, height=.25\textheight]
			{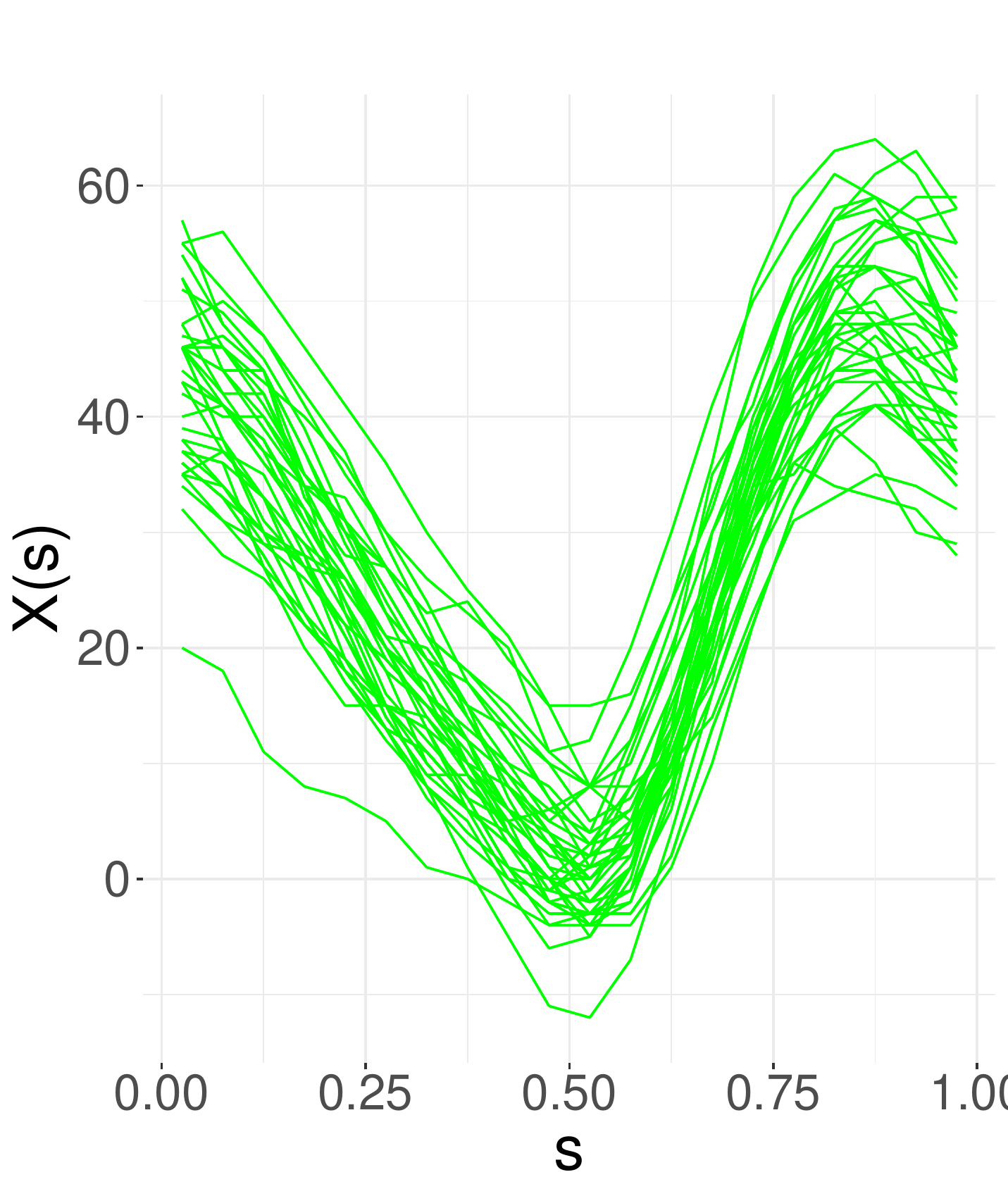}
		\caption{Hip angle ($X$)}
		\label{fig:hip}
	\end{subfigure}
	\begin{subfigure}{.49\textwidth}
		\centering
		\includegraphics[width=.7\textwidth, height=.25\textheight]
			{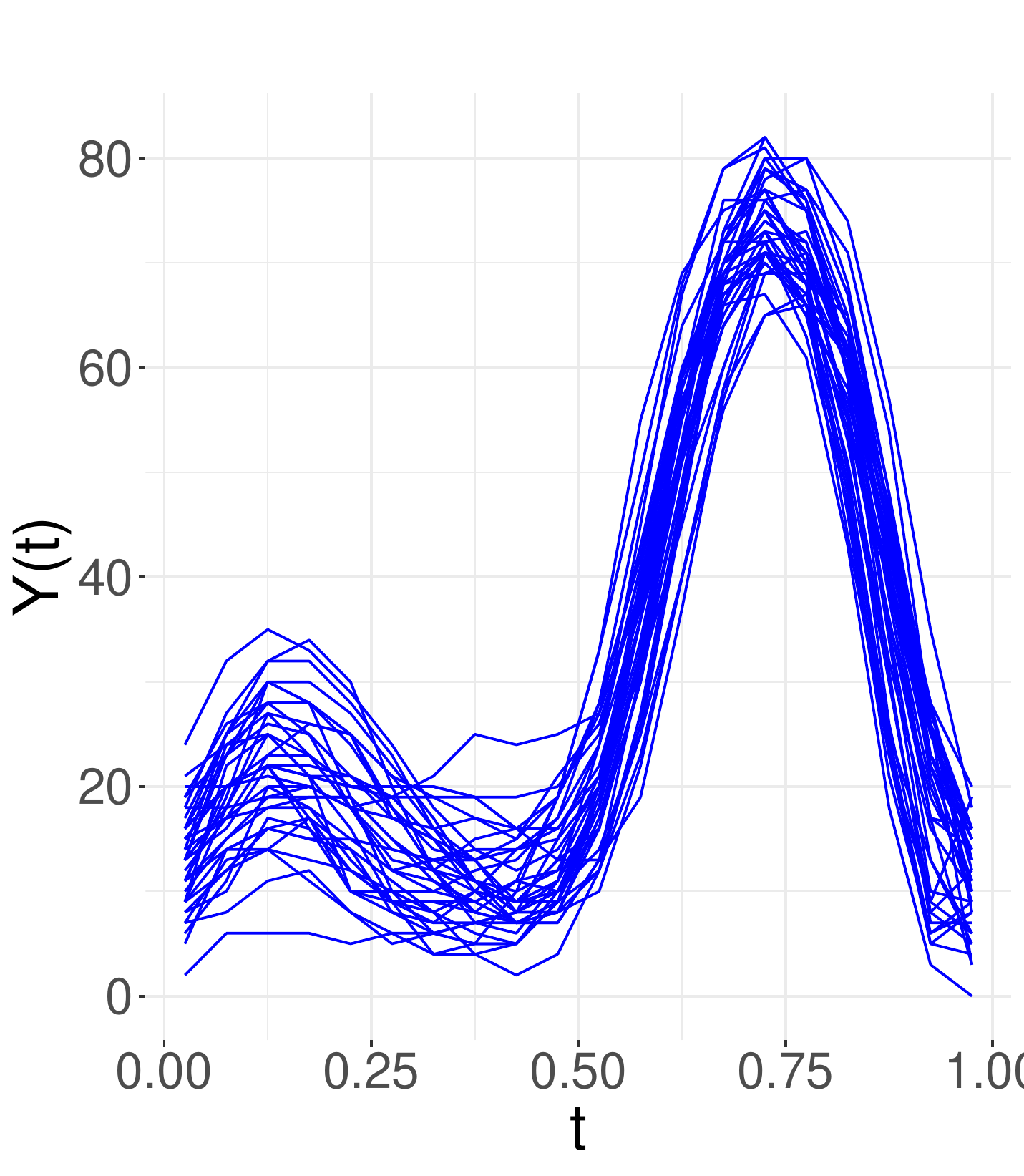}
		\caption{Knee angle ($Y$)}
		\label{fig:knee}
	\end{subfigure}
	\caption{Curves on boys' hip and knee angles.}
	\label{fig:bg}
\end{figure}

Analogous to simulation studies,
the real data analysis ended up with 
slight difference among ReISPEs averages (resp. standard deviations),
implying again the close accuracy of four competitors in prediction.
fAPLS consumed much less time in analyzing DTI data,
while this advantage did not exist for BG data.
We guessed the small sample size ($n=39$) of BG data 
saved the computational burden of SigComp.

\begin{table}[!t]
	\centering\scriptsize
	\caption{\small
		The averages$\times 100$ (and standard deviations$\times 100$) 
		of ReISPEs and ReISEEs in numerical experiments
		(running on a laptop with 
		AMD\textsuperscript{\textregistered} Ryzen\texttrademark\ 5 4500U @$6\times 2.38$ GHz with 16 GB RAM).
		Values of $\rho$ and $\sigma_\varepsilon^2$ were designed,
		whereas SNR was computed accordingly.
		Row minimums are underlined.
	}
	\label{tab:error}
	\begin{tabular}{cccc|rrrr}
	\hline
	& $\rho$ & $\sigma_\varepsilon^2$ & SNR 
	& \multicolumn{1}{c}{fAPLS} 
	& \multicolumn{1}{c}{SigComp} 
	& \multicolumn{1}{c}{NIPALS} 
	& \multicolumn{1}{c}{SIMPLS} \\ \hline
	\multicolumn{8}{c}{Estimation error: mean ReISEE$\times 100$ (standard deviation$\times 100$)} \\
	Simulation 1 & 0.1 & 1 & 10 & \underline{33.04} (14.53) & 72.63 (25.77) & 73.02  (7.12) & 42.18 (18.53) \\
	 &  & 100 & 1 & \underline{37.84}  (8.80) & 84.12 (21.39) & 73.71  (5.96) & 45.41 (14.27) \\
	 & 0.9 & 1 & 11 & \underline{33.20} (14.64) & 72.60 (27.28) & 71.82  (8.49) & 42.21 (18.57) \\
	 &  & 100 & 1 & \underline{37.84} (10.22) & 86.67 (23.26) & 73.78  (6.53) & 44.44 (16.77) \\
 	Simulation 2 & 0.1 & 1 & 7 & \underline{0.89} (0.62) & 1.35 (0.48) & 7.44 (1.33) & 0.95 (0.49) \\
	 &  & 80 & 1 & 13.01 (3.23) & 13.05 (5.50) & 21.00 (6.69) & \underline{12.75} (3.79) \\
	 & 0.9 & 1 & 7 & \underline{1.42} (2.02) & 1.68 (0.75) & 7.90 (1.63) & 1.64 (1.72) \\
	 &  & 80 & 1 & \underline{17.46} (15.45) & 21.03 (16.43) & 26.55 (12.73) & 23.46 (20.67) \\
 	Simulation 3 & 0.1 & 0.05 & 7 & 2.33 (5.92) & 8.98 (21.21) & 4.28 (8.54) & \underline{1.99} (5.75) \\
	 &  & 1 & 2 & \underline{7.04} (4.25) & 19.32 (26.67) & 13.66 (20.08) & 8.70 (15.92)  \\
	 & 0.9 & 0.05 & 7 & 6.27 (14.55) & 6.62 (18.34) & 6.22 (11.46) & \underline{5.33} (10.82) \\
	 &  & 1 & 2 & 23.53 (30.15) & 20.36 (27.07) & 27.88 (32.14) & \underline{16.68} (26.25) \\
	\multicolumn{8}{c}{Prediction error: mean ReISPE$\times 100$ (standard deviation$\times 100$)} \\
	Simulation 1 & 0.1 & 1 & 10 & \underline{1.65} (0.39) & 1.74 (0.43) & 1.94 (0.43) & 1.80 (0.45) \\
	 &  & 100 & 1 & 47.69 (5.37) & 47.69 (5.40) & 47.70 (5.37) & \underline{47.62} (5.37) \\
	 & 0.9 & 1 & 11 & \underline{1.56} (0.36) & 1.62 (0.43) & 1.85 (0.48) & 1.71 (0.42) \\
	 &  & 100 & 1 & \underline{46.57} (6.10) & \underline{46.57} (6.09) & 46.71 (6.08) & 46.61 (6.07) \\
 	Simulation 2 & 0.1 & 1 & 7 & 2.69 (0.50) & \underline{2.57} (0.49) & 2.69 (0.50) & 2.69 (0.49) \\
	 &  & 80 & 1 & 68.88 (5.33) & \underline{68.80} (5.64) & 68.89 (5.33) & 68.90 (5.40) \\
	 & 0.9 & 1 & 7 & 2.68 (0.63) & \underline{2.56} (0.62) & 2.70 (0.63) & 2.70 (0.63) \\
	 &  & 80 & 1 & 68.90 (6.35) & 69.06 (6.48) & \underline{68.88} (6.39) & 69.14 (6.43) \\
 	Simulation 3 & 0.1 & 0.05 & 7 & 28.68 (4.32) & 28.62 (4.43) & 28.61 (4.39) & \underline{28.56} (4.42) \\
	 &  & 1 & 2 & 90.18 (3.48) & 90.11 (3.38) & 90.21 (3.59) & \underline{90.04} (3.50) \\
	 & 0.9 & 0.05 & 7 & 28.37 (5.41) & \underline{28.11} (5.57) & 28.33 (5.36) & 28.37 (5.38) \\
	 &  & 1 & 2 & 89.36 (4.40) & \underline{89.16} (4.92) & 89.36 (4.48) & 89.22 (4.70) \\
 	DTI & - & - & - & 81.10 (3.51) & 81.97 (3.94) & 80.99 (3.69) & \underline{80.75} (3.59) \\
	BG & - & - & - & \underline{62.65} (12.68) & 71.23 (15.64) & 69.41 (16.12) & 65.04 (15.83) \\
	 \hline
	\end{tabular}
\end{table}

\begin{table}[!t]
	\centering\scriptsize
	\caption{\small
		Averages (and standard deviations) of component numbers 
		and total running time (in seconds)
		for numerical experiments
		(running on a laptop with 
		AMD\textsuperscript{\textregistered} Ryzen\texttrademark\ 5 4500U @$6\times 2.38$ GHz with 16 GB RAM).
		Values of $\rho$ and $\sigma_\varepsilon^2$ were designed,
		whereas SNR was computed accordingly.
		Row minimums are underlined.
	}
	\label{tab:ancillary}
	\begin{tabular}{cccc|rrrr}
	\hline
	& $\rho$ & $\sigma_\varepsilon^2$ & SNR 
	& \multicolumn{1}{c}{fAPLS} 
	& \multicolumn{1}{c}{SigComp} 
	& \multicolumn{1}{c}{NIPALS} 
	& \multicolumn{1}{c}{SIMPLS} \\ \hline
	\multicolumn{8}{c}{Number of components: average number (standard deviation)} \\
	Simulation 1 & 0.1 & 1 & 10 & 2.2  (0.4) & 2.2  (0.4) & \underline{2.1} (0.3) & 2.2 (0.4) \\
	 &  & 100 & 1 & \underline{2.1} (0.3) & \underline{2.1} (0.4) & \underline{2.1} (0.3) & \underline{2.1} (0.4) \\
	 & 0.9 & 1 & 11 & \underline{2.2} (0.4) & \underline{2.2} (0.4) & \underline{2.2} (0.4) & \underline{2.2} (0.4) \\
	 &  & 100 & 1 & \underline{2.1} (0.3) & \underline{2.1} (0.4) & \underline{2.1} (0.3) & 2.2 (0.4) \\
 	Simulation 2 & 0.1 & 1 & 7 & 6.2 (1.0) & \underline{3.1} (0.2) & 10.3  (1.1) & 9.9  (1.2) \\
	 &  & 80 & 1 & \underline{2.0} (0.3) & 3.5 (1.1) & 4.3 (1.3) & 4.3 (1.3) \\
	 & 0.9 & 1 & 7 & 6.5  (1.8) & \underline{3.0} (0.0) & 10.2 (1.4) & 10.0 (1.7) \\
	 &  & 80 & 1 & \underline{2.6} (1.9) & 4.7 (1.3) & 4.6 (1.8) & 5.2 (2.7) \\
 	Simulation 3 & 0.1 & 0.05 & 7 & 1.9 (0.5) & \underline{1.4} (0.6) & 2.2 (0.6) & 1.8 (0.7) \\
	 &  & 1 & 2 & \underline{1.3} (0.5) & \underline{1.3} (0.5) & 2.1 (0.4) & 1.4 (0.6) \\
	 & 0.9 & 0.05 & 7 & 2.1 (0.7) & \underline{1.5} (0.8) & 2.3 (0.5) & 2.0 (0.9) \\
	 &  & 1 & 2 & 1.7 (0.9) & \underline{1.4} (0.7) & 2.3 (0.6) & 1.5 (0.8) \\
 	DTI & - & - & - & 4.1 (0.8) & \underline{3.3} (0.7) & 4.6 (0.5) & 4.8 (0.4) \\
	BG & - & - & - & \underline{2.7} (0.9) & 4.1 (1.4) & 5.0 (1.6) & 4.9 (1.5) \\
	\multicolumn{8}{c}{Total running time in seconds for all replicates/splits} \\
	Simulation 1 & 0.1 & 1 & 10 & \underline{3.0} & 41.8 & 350.3 & 150.2  \\
	 &  & 100 & 1 & \underline{3.4} & 41.4 & 358.5 & 149.6 \\
	 & 0.9 & 1 & 11 & \underline{3.4} & 41.2 & 327.2 & 148.7 \\
	 &  & 100 & 1 & \underline{3.4} & 40.1 & 327.3 & 147.5 \\
 	Simulation 2 & 0.1 & 1 & 7 & \underline{36.1} & 48.6 & 398.5 & 241.7  \\
	 &  & 80 & 1 & \underline{35.8} & 51.4 & 416.3 & 242.2 \\
	 & 0.9 & 1 & 7 & \underline{35.7} & 50.1 & 375.0 & 242.3 \\
	 &  & 80 & 1 & \underline{36.4} & 49.3 & 377.7 & 242.9 \\
 	Simulation 3 & 0.1 & 0.05 & 7 & \underline{6.9} & 41.6 & 327.9 & 163.2 \\
	 &  & 1 & 2 & \underline{6.4} & 44.0 & 336.8 & 164.4 \\
	 & 0.9 & 0.05 & 7 & \underline{6.4} & 41.0 & 272.0 & 162.8 \\
	 &  & 1 & 2 & \underline{6.4} & 41.6 & 275.7 & 163.6 \\
 	DTI & - & - & - & \underline{5.4} & 71.8 & 266.4 & 101.6  \\
	BG & - & - & - & \underline{1.6} & 1.7 & 29.2 & 23.5  \\\hline
	\end{tabular}
\end{table}

\section{Conclusion \& discussion}\label{sec:discussion}

Fitting FoFR,
we suggest fAPLS,
a route of FPLS via Krylov subspaces.
The fAPLS estimator owns a concise and explicit expression.
Meanwhile,
we introduce an alternative but equivalent version,
stabilizing numerical outputs.
In spite of its less computational burden,
fAPLS is fairly competitive to existing FPLS routes in terms of both estimation and prediction accuracy.
Our proposal is potential to be further extended to more complex data structure,
as illustrated in the following paragraphs.

More efforts can be put on the estimation of $r_{XX}$ and $r_{XY}$.
To accommodate measurement errors,
the local linear smoothing
\citep[see, e.g.,][]{YaoMullerWang2005a, LiHsing2010}
and spline smoothing 
\citep[see, e.g.,][]{XiaoLiCheckleyCrainiceanu2018}
may be helpful.
In the case of geographic data, 
the spatial correlation 
(i.e., $X_i$ and $X_{i'}$, $i\neq i'$, no longer mutually independent)
lead to a potential inconsistency of PLS estimators; 
see \citet[Theorem~1]{SingerKrivobokovaMunkDeGroot2016} for this issue in the multivariate context.
A naive correction, 
transplanted from \citet[Section~4.1]{SingerKrivobokovaMunkDeGroot2016},
is to instead implement the regression on transformed observations $(X_i^*, Y_i^*)$, 
$i=1,\ldots,n$, 
such that, for all $(s,t)\in\mathbb I_X\times\mathbb I_Y$,
$[X_1^*(s),\ldots,X_n^*(s)]^\top=\bm V_{XX}^{-1/2}(s)[X_1(s),\ldots,X_n(s)]^\top$
and 
$[Y_1^*(t),\ldots,Y_n^*(t)]^\top=\bm V_{YY}^{-1/2}(t)[Y_1(t),\ldots,Y_n(t)]^\top$,
with matrices
$\bm V_{XX}(s)=[\cov\{X_i(s),X_{i'}(s)\}]_{n\times n}$
and $\bm V_{YY}(t)=[\cov\{Y_i(t),Y_{i'}(t)\}]_{n\times n}$.
But it is even challenging to recover $\bm V_{XX}$ and $\bm V_{YY}$ sufficiently accurately
without specifying the dependence structure,
since there is only one observation for each $i$.
Alternatively and more practically,
one can target at correcting naive $\hat r_{XX}$ and $\hat r_{XY}$ for dependent subjects;
\citet{PaulPeng2011} offered a solution to it.

fAPLS has got a heuristic extension to
multiple functional covariates, 
i.e., associated with each realization $Y_i\sim Y$,
there are $m>1$ functional covariates,
say $X_{ij}\sim X_{\cdot j}$, $1\leq j\leq m$,
and correspondingly $m$ coefficient functions $\beta^{*(j)}$, $1\leq j\leq m$.
In particular,
$$
    Y_i(t)=\mu_Y(t)+
        \sum_{i=1}^m\mathcal L_{X_{ij}}(\beta^{*(j)}) + \varepsilon_i(t),
$$
where $Y_i$ and $X_{ij}$ are assumed to be independent across all $i$.
Following the idea of \eqref{eq:beta.p.fapls},
an ad hoc estimator for $(\beta^{*(1)},\ldots,\beta^{*(m)})$ is thus
\begin{multline*}
    (\hat\beta_{\rm fAPLS}^{(1)},\ldots,\hat\beta_{\rm fAPLS}^{(m)})
    =\argmin_{
        \beta^{(j)}\in {\rm KS}_p(\widehat\Gamma_{X_{\cdot j}X_{\cdot j}}, \beta^{*(j)}),
        \ 1\leq j\leq m
    }\frac{1}{m}\sum_{i=1}^m\int_{\mathbb I_Y}
        \Bigg\{Y_i(t)-\bar Y_i(t)
    \\
            -\sum_{j=1}^m\int_{\mathbb I_{X_{\cdot j}}}(X_{ij}-\bar X_{\cdot j})(s)\beta^{(j)}(s,t)\dd s
        \Bigg\}^2\dd t,
\end{multline*}
with $\bar X_{\cdot j}=m^{-1}\sum_{j=1}^m X_{ij}$
and domains $\mathbb I_{X_{\cdot j}}$ varying with $j$.
Of course,
it becomes necessary to introduce penalties once the above minimizer is not uniquely defined. 

Although fAPLS appears to merely work for linear models,
it is possible to be utilized in fitting the (functional) generalized linear models 
and proportional hazard models.
The basic idea, 
inherited from \cite{Marx1996},
is to embed PLS routes into iteratively reweighted least squares \citep{Green1984} in maximizing likelihood.
Recent successful applications of this idea include \cite{Albaqshi2017} and \cite{WangIbrahimZhu2020}.
Once trajectories are sparsely and irregularly observed,
fAPLS may be further modified in analogy to \cite{ZhouLockhart2020}.

\section*{Acknowledgment}

Special thanks go to Professor Richard A. Lockhart at Simon Fraser University for his constructive suggestions.
The author's work was financially supported by the Natural Sciences and Engineering Research Council of Canada (NSERC).

\bigskip
\appendix

\section{Appendix}\label{appendix:tech}

\setcounter{Lemma}{0}
\renewcommand{\theLemma}{\Alph{section}.\arabic{Lemma}}

In detail our assumptions are summarized as below.
\begin{enumerate}[label=(C\arabic*), resume]
	\setcounter{enumi}{\value{enumi}}
	\item\label{cond:identifiable}
		$\sum_{j, j'=1}^\infty\lambda_{j,X}^{-2}\left\{
				\int_{\mathbb I_Y}\int_{\mathbb I_X}\phi_{j,X}(s)r_{XY}(s,t)\phi_{j',Y}(t)\dd s\dd t
			\right\}^2<\infty$.
		$\beta^*$ belongs to 
		${\rm range}(\Gamma_{XX})
		=\{\Gamma_{XX}(f)\mid f\in L_2(\mathbb{I}_X\times\mathbb{I}_Y)\}$.
	\item\label{cond:moment}
		$\E(\|X\|_2^4)<\infty$ for all $t\in\mathbb I_Y$.
	\item\label{cond:bound.p}
	    As $n\to\infty$, $p=p(n)=O(n^{1/2})$.
	\item\label{cond:inf.norm}
	    Let $\mathbb I_X=[0,1]$.
		Both $\|\xi_{XX}\|_{\infty,2}$ and $\|\eta_{XX}\|_{\infty,2}$ are of order $O_p(1)$ as $n\to\infty$,
		with $\xi_{XX}$ and $\eta_{XX}$ defined as in \autoref{lemma:conv.gamma.1}
		and $\|\cdot\|_{\infty,2}$ defined such that 
		$\|f\|_{\infty,2}=\sup_{s\in \mathbb I_X}\{\int_{\mathbb I_X}f^2(s,t)\dd t\}^{1/2}$ 
		for $f\in L_2(\mathbb I_X\times\mathbb I_X)$.
	\item\label{cond:diverge.p.add.L2}
        Additional requirements on $p$ vary with the magnitude of $\|r_{XX}\|_2$; they also depend on $\tau_p$,
        the smallest eigenvalue of $\bm H_p$.
        \begin{itemize}
            \item If $\|r_{XX}\|_2\geq 1$,
		        then, as $n\to\infty$, 
		        $n^{-1}\tau_p^{-2}p^4\|r_{XX}\|_2^{4p}\max(1, \tau_p^{-2}p^2\|r_{XX}\|_2^{4p})$
		        and $n^{-1}\tau_p^{-3}p^{5}\|r_{XX}\|_2^{6p}$
		        are both of order $o(1)$;
			\item if $\|r_{XX}\|_2<1$,
		        then
		        $(n\tau_p^4)^{-1}=o(1)$ as $n$ diverges.
		\end{itemize}
    \item\label{cond:diverge.p.add.sup}
	    Keep everything in \ref{cond:diverge.p.add.L2}
	    but substitute $\|r_{XX}\|_\infty$ for $\|r_{XX}\|_2$.
	    Meanwhile,
	    require that $\|\beta_{p,\rm fAPLS}-\beta^*\|_{\infty}=o(1)$ as $p$ diverges,
	    viz. an enhanced version of \autoref{prop:conv.beta.p}.
	\item\label{cond:moment.add.sup}
	    Stochastic process $Y$ is ``eventually totally bounded in mean''
	    \citep[as defined by][(5)--(7)]{Hoffmann-Jorgensen1985};
	    i.e.,
	    in our context,
        \begin{itemize}
            \item $\E(\|Y\|_{\infty})<\infty$;
			\item for each $\epsilon>0$,
        	    there is a finite cover of $\mathbb{T}$,
        	    say ${\rm Cover}(\mathbb{T})$, 
        	    for each set $\mathbb{A}\in{\rm Cover}(\mathbb{T})$,
        	    such that
        		$\inf_{n\in\mathbb{Z}^+}
        		    n^{-1}\E\{\sup_{t,t'\in\mathbb{A}}|Y(t)-Y(t')|\}<\epsilon$.
		\end{itemize}
\end{enumerate}

Introducing \ref{cond:identifiable},
\citet[Theorem~2.3]{HeMullerWangYang2010} confirmed the identifiability of $\beta^*$ for FoFR 
and derived \eqref{eq:beta*}.
\ref{cond:identifiable} was also the foundation of \cite{YaoMullerWang2005b}.
Assumptions \ref{cond:moment}--\ref{cond:inf.norm} 
are prerequisites for the convergence of 
$\widehat\Gamma_{XX}^j(\beta^*)$ ($=\widehat\Gamma_{XX}^{j-1}(\hat r_{XY})$) 
which is uniform in $j\geq 1$.
One may feel unclear about the technical conditions stated in \ref{cond:diverge.p.add.L2} for the scenario of $\|r_{XX}\|_2\geq 1$:
virtually a special case for is that
$n^{-1}\max(\tau_p^{-4}, \tau_p^{-6}, \tau_p^{-8})=o(1)$
and $p=O(\ln\ln n)$.
Apparently,
$p$ is more restricted when $\|r_{XX}\|_2\geq 1$ than 
in the case of $\|r_{XX}\|_2<1$ 
(for the latter case $p$ is allowed to diverge at the rate of $O(n^{1/2})$);
that is why \citet{DelaigleHall2012b} 
suggested changing the scale on which $X$ is measured.
\ref{cond:diverge.p.add.sup} is stronger than \ref{cond:diverge.p.add.L2},
enabling us to consider the $L_\infty$-convergence.
At last,
we add \ref{cond:moment.add.sup}
as a prerequisite for the uniform law of large numbers for $\{Y_i\mid i\geq 1\}$.

\begin{Lemma}\label{lemma:conv.gamma.1}
	For each $(s,s',t)\in \mathbb I_X\times\mathbb I_X\times\mathbb I_Y$,
	\begin{align}
		\hat r_{XX}(s,s') &= r_{XX}(s,s') + n^{-1/2}\xi_{XX}(s,s') + n^{-1}\eta_{XX}(s,s'),\notag\\
		\hat r_{XY}(s,t) &= r_{XY}(s,t) + n^{-1/2}\xi_{XY}(s,t) + n^{-1}\eta_{XY}(s,t)
	\end{align}
	where, with identity operator $I:\mathbb R\to\mathbb R$,
	\begin{align*}
		\xi_{XX}(s,s') &= \frac{1}{\sqrt n}\sum_{i=1}^{n}(I-\E)[\{X_i(s)-\mu_X(s)\}\{X_i(s')-\mu_X(s')\}],\\
		\eta_{XX}(s,s') &= -n\{\bar X(s)-\mu_X(s)\}\{\bar X(s')-\mu_X(s')\},\\
		\xi_{XY}(s,t) &= \frac{1}{\sqrt n}\sum_{i=1}^{n}(I-\E)[\{X_i(s)-\mu_X(s)\}\{Y_i(t)-\mu_Y(t)\}],\\
		\eta_{XY}(s,t) &= -n\{\bar X(s)-\mu_X(s)\}\{\bar Y(t)-\mu_Y(t)\},
	\end{align*}
	and $\|\xi_{XX}\|_2$, $\|\eta_{XX}\|_2$, $\|\xi_{XY}\|_2$ and $\|\eta_{XY}\|_2$ all equal $O_p(1)$ as $n$ diverges.
\end{Lemma}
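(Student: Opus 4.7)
The plan is to carry out a one-line algebraic decomposition and then bound the four residual pieces by computing their second moments. First I would expand $X_i^{\rm cent}(s) = \tilde X_i(s) - a(s)$, where $\tilde X_i = X_i - \mu_X$ and $a = \bar X - \mu_X$, inside the definition \eqref{eq:r.xx.hat}. The four resulting terms simplify because $n^{-1}\sum_i \tilde X_i \equiv a$: the two cross-terms each reduce to $-a(s)a(s')$, and together with the leftover $+a(s)a(s')$ they combine into the single residual
$$
    \hat r_{XX}(s,s') = \frac{1}{n}\sum_{i=1}^n \tilde X_i(s)\tilde X_i(s') - a(s)a(s').
$$
Since $r_{XX}(s,s') = \E\{\tilde X_1(s)\tilde X_1(s')\}$, the first piece equals $r_{XX}(s,s') + n^{-1/2}\xi_{XX}(s,s')$ by the definition of $\xi_{XX}$, and the second is exactly $n^{-1}\eta_{XX}(s,s')$. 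Replacing one copy of $\tilde X$ by $\tilde Y := Y - \mu_Y$ gives the identity for $\hat r_{XY}$ by the same manipulation.

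For the $O_p(1)$ bounds I would invoke Markov's inequality on second moments. Independence of the $X_i$ and i.i.d.\ centering yield
$$
    \E\|\xi_{XX}\|_2^2 = \int\!\int \var\{\tilde X_1(s)\tilde X_1(s')\}\,\dd s\,\dd s' \leq \E\|\tilde X_1\|_2^4,
$$
finite by \ref{cond:moment}. The rank-one structure of $\eta_{XX}$ gives $\|\eta_{XX}\|_2 = n\|\bar X - \mu_X\|_2^2$, while $\E\|\bar X - \mu_X\|_2^2 = n^{-1}\int r_{XX}(s,s)\,\dd s = O(n^{-1})$ by continuity of $r_{XX}$ on a compact interval, so $\|\eta_{XX}\|_2 = O_p(1)$. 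The arguments for $\xi_{XY}$ and $\eta_{XY}$ are verbatim once $\E\|Y\|_2^4 < \infty$ is verified; I would obtain this from the FoFR model by combining $\|\mathcal L_X(\beta^*)\|_2 \leq \|\tilde X\|_2\|\beta^*\|_2$ (Cauchy--Schwarz), \ref{cond:moment}, and the Gaussianity of $\varepsilon$ with continuous covariance on a compact domain.

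The main obstacle is honestly only bookkeeping: the one non-automatic observation is that the three $a$-involving terms collapse to a single $-a(s)a(s')$, and that is precisely what accounts for the minus sign built into the definition of $\eta_{XX}$. Everything else is a routine second-moment calculation underwritten by \ref{cond:moment} plus the Gaussian error structure.
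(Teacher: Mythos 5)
Your argument is correct, but it is a genuinely different (and more self-contained) route than the paper's: the paper proves \autoref{lemma:conv.gamma.1} in one line by citing the expansion (5.1) of \citet{DelaigleHall2012b}, which was stated there for scalar-on-function regression, whereas you rederive the decomposition from scratch and then control the four remainders by second moments. The algebra is right: with $\tilde X_i=X_i-\mu_X$ and $a=\bar X-\mu_X$ the three $a$-terms do collapse to $-a(s)a(s')$, which is exactly $n^{-1}\eta_{XX}$, and similarly for $\hat r_{XY}$. Your moment bounds also go through: $\E\|\xi_{XX}\|_2^2\le\E\|\tilde X_1\|_2^4$, which is finite under \ref{cond:moment} once you note $\E\|X\|_2^2<\infty$ forces $\mu_X\in L_2$; the rank-one factorizations $\|\eta_{XX}\|_2=n\|\bar X-\mu_X\|_2^2$ and $\|\eta_{XY}\|_2=n\|\bar X-\mu_X\|_2\|\bar Y-\mu_Y\|_2$ combined with $\E\|\bar X-\mu_X\|_2^2=n^{-1}\int r_{XX}(s,s)\,\dd s$ (and its $Y$-analogue) give $O_p(1)$ via Markov. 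The one place where your proof adds genuine content beyond the citation is the treatment of $\xi_{XY}$: the functional response requires $\E\|Y\|_2^4<\infty$, which is not among the paper's listed conditions, and your derivation of it from the model via $\|\mathcal L_X(\beta^*)\|_2\le\|X-\mu_X\|_2\|\beta^*\|_2$, \ref{cond:moment}, and the Gaussianity of $\varepsilon$ with continuous covariance on a compact domain (so $\E\|\varepsilon\|_2^4=\int\int\{ \sigma(t,t)\sigma(t',t')+2\sigma^2(t,t')\}\dd t\,\dd t'<\infty$ by Isserlis) is exactly the verification the paper leaves implicit when transplanting a scalar-response result to FoFR. In short: same decomposition as the cited source, but your version makes the function-on-function adaptation explicit, which is a strength rather than a gap.
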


\begin{proof}[Proof of \autoref{lemma:conv.gamma.1}]
	It is an immediate implication of \citet[(5.1)]{DelaigleHall2012b}.
\end{proof}

\begin{Lemma}\label{lemma:conv.gamma.i}
    Assume \ref{cond:identifiable} and \ref{cond:moment}
    and that there is $C>0$ such that, 
    for all $n$, 
    we have $p\leq Cn^{-1/2}$ (i.e., condition \ref{cond:bound.p}).
    Then, for each $\epsilon>0$, 
    there are positive $C_1$, $C_2$ and $n_0$ such that,
    for each $n> n_0$,
    $$
        \Pr\left[
            \bigcap_{j=1}^{p}\left\{
                \|\widehat\Gamma_{XX}^j(\beta^*)-\Gamma_{XX}^j(\beta^*)\|_2
                \leq n^{-1/2}\|r_{XX}\|_2^{j-1}\{C_1 + C_2(j-1)\}
            \right\}
        \right]\geq 1-\epsilon.
	$$
	Assuming one more condition \ref{cond:inf.norm},
    $$
        \Pr\left[\bigcap_{j=1}^{p}\left\{
            \|\widehat\Gamma_{XX}^j(\beta^*)-\Gamma_{XX}^j(\beta^*)\|_{\infty}
            \leq n^{-1/2}\|r_{XX}\|_{\infty}^{i-1}\{C_1+C_2(j-1)\}
        \right\}\right]
        \geq 1-\epsilon.
    $$
\end{Lemma}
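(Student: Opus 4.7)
The plan is to reduce the claim to analyzing $\Delta_j := \widehat\Gamma_{XX}^{j-1}(\hat r_{XY}) - \Gamma_{XX}^{j-1}(r_{XY})$, since $\widehat\Gamma_{XX}^j(\beta^*) = \widehat\Gamma_{XX}^{j-1}(\hat r_{XY})$ and $\Gamma_{XX}^j(\beta^*) = \Gamma_{XX}^{j-1}(r_{XY})$. Writing $D := \widehat\Gamma_{XX} - \Gamma_{XX}$ for the operator with kernel $\hat r_{XX} - r_{XX}$, the telescoping identity $\widehat\Gamma_{XX}^{j-1}\hat r_{XY} - \Gamma_{XX}^{j-1}r_{XY} = \widehat\Gamma_{XX}(\Delta_{j-1}) + D(\Gamma_{XX}^{j-2}r_{XY})$ for $j\geq 2$ unrolls to
\[
\Delta_j = \widehat\Gamma_{XX}^{j-1}(\Delta_1) + \sum_{k=2}^{j}\widehat\Gamma_{XX}^{j-k}\bigl(D(\Gamma_{XX}^{k-2}(r_{XY}))\bigr),\qquad \Delta_1 = \hat r_{XY} - r_{XY}.
\]

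Next, I would fix $\epsilon>0$ and use \autoref{lemma:conv.gamma.1}: pick $M$ large enough that, with probability $\geq 1-\epsilon$ for all large $n$, one has $\max\{\|\xi_{XX}\|_2,\|\eta_{XX}\|_2,\|\xi_{XY}\|_2,\|\eta_{XY}\|_2\}\leq M$. On this event, $\|\hat r_{XX}-r_{XX}\|_2\leq C_0n^{-1/2}$ and $\|\Delta_1\|_2=\|\hat r_{XY}-r_{XY}\|_2\leq C_0n^{-1/2}$. Since the integral operator with kernel $k\in L_2(\mathbb I_X\times\mathbb I_X)$ acting on $L_2(\mathbb I_X\times\mathbb I_Y)$ in the first coordinate has operator norm bounded by the Hilbert–Schmidt norm $\|k\|_2$, I get $\|\Gamma_{XX}\|_{\mathrm{op}}\leq\rho:=\|r_{XX}\|_2$, $\|\widehat\Gamma_{XX}\|_{\mathrm{op}}\leq\rho+\delta$ where $\delta\leq C_0n^{-1/2}$, and $\|\Gamma_{XX}^{k-2}(r_{XY})\|_2\leq\rho^{k-2}\|r_{XY}\|_2$. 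Plugging these into the unrolled identity,
\[
\|\Delta_j\|_2 \leq (\rho+\delta)^{j-1}\|\Delta_1\|_2 + \|r_{XY}\|_2\,\delta\sum_{k=2}^{j}(\rho+\delta)^{j-k}\rho^{k-2}\leq (\rho+\delta)^{j-1}C_0n^{-1/2} + C_0n^{-1/2}\|r_{XY}\|_2(j-1)(\rho+\delta)^{j-2}.
\]

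The decisive step is making this bound uniform in $j\leq p$. Here condition \ref{cond:bound.p} enters: since $p=O(n^{1/2})$ and $\delta=O(n^{-1/2})$, one has $(j-1)\delta/\rho\leq p\delta/\rho=O(1)$, so $(1+\delta/\rho)^{j-1}\leq e^{p\delta/\rho}\leq K$ for a deterministic constant $K$. Hence $(\rho+\delta)^{j-1}\leq K\rho^{j-1}$ and $(\rho+\delta)^{j-2}\leq K\rho^{j-2}$ uniformly in $j\leq p$, yielding
\[
\|\Delta_j\|_2 \leq n^{-1/2}\rho^{j-1}\bigl\{KC_0 + KC_0\|r_{XY}\|_2(j-1)/\rho\bigr\}=n^{-1/2}\|r_{XX}\|_2^{j-1}\{C_1+C_2(j-1)\},
\]
for $C_1=KC_0$ and $C_2=KC_0\|r_{XY}\|_2/\rho$; intersecting over $j\in\{1,\dots,p\}$ gives the claimed event. (The degenerate $\rho=0$ case is trivial, as it forces $\Gamma_{XX}^j(\beta^*)\equiv 0$ and a direct bound of $\|\Delta_j\|_2$ suffices.) For the supremum version, I would replicate the same recursion but propagate bounds in the mixed norm $\|\cdot\|_{\infty,2}$, using Cauchy–Schwarz in the form $\sup_{s}|\int k(s,s')f(s',t)\dd s'|\leq\|k\|_{\infty,2}\{\int f^2(s',t)\dd s'\}^{1/2}$; condition \ref{cond:inf.norm} replaces \autoref{lemma:conv.gamma.1} in supplying $\|\hat r_{XX}-r_{XX}\|_{\infty,2}=O_p(n^{-1/2})$, and the same geometric-factor control carries through.

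The main obstacle is the uniformity in $j$: naive telescoping produces factors like $\max(\|\hat r_{XX}\|_2,\|r_{XX}\|_2)^{j-1}$ whose ratio with $\rho^{j-1}$ can blow up unless the excess $\delta$ is controlled relative to $p$. The crucial quantitative trick—$(1+\delta/\rho)^p\leq e^{p\delta/\rho}=O(1)$ under \ref{cond:bound.p}—is what makes the constants $C_1$ and $C_2$ choosable independently of $j$, and it is exactly where the cap $p=O(n^{1/2})$ is indispensable.
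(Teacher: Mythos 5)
Your proposal is correct and follows essentially the same route as the paper: both propagate the error through the recursion $\widehat\Gamma_{XX}^j(\beta^*)-\Gamma_{XX}^j(\beta^*)=\widehat\Gamma_{XX}\{\widehat\Gamma_{XX}^{j-1}(\beta^*)-\Gamma_{XX}^{j-1}(\beta^*)\}+(\widehat\Gamma_{XX}-\Gamma_{XX})\{\Gamma_{XX}^{j-1}(\beta^*)\}$ (your unrolled sum is just the iterated form of the paper's norm recursion), invoke \autoref{lemma:conv.gamma.1} for the $O_p(n^{-1/2})$ bounds, and obtain uniformity in $j\leq p$ via exactly the paper's trick $(1+C_0n^{-1/2}/\|r_{XX}\|_2)^{j-1}\leq\exp(CC_0/\|r_{XX}\|_2)$ under \ref{cond:bound.p}, yielding constants equivalent to the paper's $C_1=C_0\exp(CC_0/\|r_{XX}\|_2)$ and $C_2=\|\beta^*\|_2C_1$ (you use $\|\Gamma_{XX}^{k-2}(r_{XY})\|_2\leq\|r_{XX}\|_2^{k-2}\|r_{XY}\|_2$ in place of $\|\Gamma_{XX}^{j'}(\beta^*)\|_2\leq\|r_{XX}\|_2^{j'}\|\beta^*\|_2$, which is the same bound since $r_{XY}=\Gamma_{XX}(\beta^*)$). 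Your sketch of the supremum-norm case via the mixed norm $\|\cdot\|_{\infty,2}$ and \ref{cond:inf.norm} is at the same level of detail as the paper's own brief treatment and is acceptable.
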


\begin{proof}[Proof of \autoref{lemma:conv.gamma.i}]
    Since $\Gamma_{XX}(\beta^*)=r_{XY}$ and $\widehat\Gamma_{XX}(\beta^*)=\hat r_{XY}$,
    \autoref{lemma:conv.gamma.i} is simply implied by \autoref{lemma:conv.gamma.1}
    when $p=1$.
    For integer $j\geq 2$ and each $(s,s',t)\in\mathbb I_X\times\mathbb I_X\times\mathbb I_Y$,
	\begin{align*}
	  |\widehat\Gamma_{XX}^j(\beta^*)(s,t)&-\Gamma_{XX}^j(\beta^*)(s,t)|
	  \\
	  =&\ |
	      \widehat\Gamma_{XX}\{\widehat\Gamma_{XX}^{j-1}(\beta^*)-\Gamma_{XX}^{j-1}(\beta^*)\}(s,t)
	      +(\widehat\Gamma_{XX}-\Gamma_{XX})\{\Gamma_{XX}^{j-1}(\beta^*)\}(s,t)
	  |
	  \\
	  \leq&\ \left\{\int_{\mathbb I_X}\hat r_{XX}^2(s,s')\dd w\right\}^{1/2}
	  \left[
	  	\int_{\mathbb I_X}\{\widehat\Gamma_{XX}^{j-1}(\beta^*)-\Gamma_{XX}^{j-1}(\beta^*)\}(s',t)\dd w
	  \right]^{1/2}
	  \\
	  &+ \left[\int_{\mathbb I_X}\{\hat r_{XX}(s,s')-r_{XX}(s,s')\}^2\dd s'\right]^{1/2}
	     	\left\{\int_{\mathbb I_X}\Gamma_{XX}^{j-1}(\beta^*)(s',t)\dd s'\right\}^{1/2}.
	\end{align*}
    It implies that,
    by the triangle inequality,
    $$
        \|\widehat\Gamma_{XX}^j(\beta^*)-\Gamma_{XX}^j(\beta^*)\|_2
        \leq
        \|\hat r_{XX}\|_2
            \|\widehat\Gamma_{XX}^{j-1}(\beta^*)-\Gamma_{XX}^{j-1}(\beta^*)\|_2
        +
        \|\hat r_{XX}-r_{XX}\|_2\|\Gamma_{XX}^{j-1}(\beta^*)\|_2.
    $$
    On iteration it gives that
    \begin{multline}\label{eq:bound.gamma.j.hat.L2}
        \|\widehat\Gamma_{XX}^j(\beta^*)-\Gamma_{XX}^j(\beta^*)\|_2
        \leq
        \|\hat r_{XX}\|_2^{j-1}\|
            \widehat\Gamma_{XX}(\beta^*)-\Gamma_{XX}(\beta^*)\|_2
        \\
        +
        \|\hat r_{XX}-r_{XX}\|_2\sum_{j'=1}^{j-1}
            \|\hat r_{XX}\|_2^{j-j'-1}\|\Gamma_{XX}^{j'}(\beta^*)\|_2.
    \end{multline}
    For each $\epsilon>0$,
    there is $n_0>0$ such that,
    for all $n>n_0$,
    we have
    \begin{align*}
        1-\epsilon/2
        \leq&\ \Pr(\|\hat r_{XX}-r_{XX}\|_2\leq C_0n^{-1/2})
        \leq\Pr(\|\hat r_{XX}\|_2\leq \|r_{XX}\|_2+C_0n^{-1/2}),
        \\
        1-\epsilon/2
        \leq&\ \Pr(\|\hat r_{XY}-r_{XY}\|_2\leq C_0n^{-1/2}),
    \end{align*}
    with constant $C_0>0$, by \autoref{lemma:conv.gamma.1}.
    It follows \eqref{eq:bound.gamma.j.hat.L2} that
    \begin{align*}
        1-\epsilon
        \leq\Pr\Bigg[\bigcap_{j=1}^{p}\Bigg[
            \|(\widehat\Gamma_{XX}^j&-\Gamma_{XX}^j)(\beta^*)\|_2
            \leq C_0n^{-1/2}\Bigg\{
	            (\|r_{XX}\|_2+C_0n^{-1/2})^{j-1}
	    \\
	            &+
	            \sum_{j'=1}^{j-1}
	                \|r_{XX}\|_2^{j'}\|\beta^*\|_2(\|r_{XX}\|_2+C_0n^{-1/2})^{j-j'-1}
        \Bigg\}\Bigg]\Bigg]
        \\
        \leq\Pr\Bigg[\bigcap_{j=1}^{p}\Bigg[
            \|(\widehat\Gamma_{XX}^j&-\Gamma_{XX}^j)(\beta^*)\|_2  
            \leq C_0n^{-1/2}\|r_{XX}\|_2^{j-1}\Bigg\{
            (1+C_0n^{-1/2}/\|r_{XX}\|_2)^{j-1}
        \\
            &+
            \|\beta^*\|_2\sum_{j'=1}^{j-1}
                (1+C_0n^{-1/2}/\|r_{XX}\|_2)^{j-j'-1}
        \Bigg\}\Bigg]\Bigg]
        \\
        \leq\Pr\Bigg[\bigcap_{j=1}^{p}\Bigg\{
            \|\widehat\Gamma_{XX}^j&(\beta^*)-\Gamma_{XX}^j(\beta^*)\|_2
        \\
            &\leq n^{-1/2}\|r_{XX}\|_2^{j-1}\{C_1+C_2(j-1)\}
        \Bigg\}\Bigg],
        \quad\text{(since $p\leq Cn^{1/2}$)}
    \end{align*}
    where $C_1=C_0\exp(CC_0/\|r_{XX}\|_2)$ 
    and $C_2=\|\beta^*\|_2C_1$.
    
    Suppose \ref{cond:inf.norm} holds.
    Similar to \eqref{eq:bound.gamma.j.hat.L2},
    \begin{align*}
        \|\widehat\Gamma_{XX}^j(\beta^*)-\Gamma_{XX}^j(\beta^*)\|_\infty
        \leq& \
        \|\hat r_{XX}\|_\infty^{j-1}\|
            \widehat\Gamma_{XX}(\beta^*)-\Gamma_{XX}(\beta^*)\|_\infty
        \\
        &+
        \|\hat r_{XX}-r_{XX}\|_\infty\sum_{j'=1}^{j-1}
            \|\hat r_{XX}\|_\infty^{j-j'-1}\|\Gamma_{XX}^{j'}(\beta^*)\|_\infty
        \\
        \leq& \
        \|\hat r_{XX}\|_\infty^{j-1}\|
            \widehat\Gamma_{XX}(\beta^*)-\Gamma_{XX}(\beta^*)\|_\infty
        \\
        &+
        \|\hat r_{XX}-r_{XX}\|_\infty\sum_{j'=1}^{j-1}
            \|\hat r_{XX}\|_\infty^{j-j'-1}\|r_{XX}\|_\infty^{j'}\|\beta^*\|_\infty.
    \end{align*}
    Mimicking the argument above for the $L_2$ sense,
    one obtains that
    $$
        \Pr\left[\bigcap_{j=1}^{p}\left\{
            \|\widehat\Gamma_{XX}^j(\beta^*)-\Gamma_{XX}^j(\beta^*)\|_{\infty}
            \leq n^{-1/2}\|r_{XX}\|_\infty^{j-1}\{C_1+C_2(j-1)\}
        \right\}\right]
        \geq 1-\epsilon,
    $$
    with, at this time, $C_1=C_0\exp(CC_0/\|r_{XX}\|_\infty)$ and $C_2=\|\beta^*\|_\infty C_1$.
    The finiteness of $\|\beta^*\|_\infty$ originates from the continuity of eigenfunctions $\phi_{i,X}$'s and $\phi_{i,Y}$'s
    (refer to the Mercer's theorem).
\end{proof}

\begin{proof}[Proof of \autoref{prop:conv.beta.p}]
	Recall $\beta_{p,p',\rm FPCR}$ at \eqref{eq:beta.p.q.fpca}
	and introduce $\beta_{p,\infty,\rm FPCR}\in L_2(\mathbb I_X\times \mathbb I_Y)$ 
	such that
	$$
		\beta_{p,\infty,\rm FPCR}(s,t)
		=\lim_{p'\to\infty}\beta_{p,p',\rm FPCR}(s,t)
		= \sum_{j=1}^p\frac{\phi_{j,X}(s)}{\lambda_{j,X}}\int_{\mathbb I_X}\phi_{j,X}(s')r_{XY}(s',t)\dd s'.
	$$
	It follows that 
	$$
		\Gamma_{XX}(\beta_{p,\infty,\rm FPCR})(s,t)
		= \sum_{j=1}^p\phi_{j,X}(s)\int_{\mathbb I_X}\phi_{j,X}(s')r_{XY}(s',t)\dd w.
	$$
	Now 
	$$
		[
			(\lambda_{1,X}I-\Gamma_{XX})
			\circ\cdots\circ
			(\lambda_{p,X}I-\Gamma_{XX})
		]
		(\beta_{p,\infty,\rm FPCR})=0
	$$
	in which the left-hand side equals 
	$\sum_{i=j}^pa_j\Gamma_{XX}^j(\beta_{p,\infty,\rm FPCR})$ 
	with $a_0=\prod_{j=1}^p\lambda_{j,X}>0$.
	Therefore,
	$$
		\beta_{p,\infty,\rm FPCR} = -\sum_{j=1}^p\frac{a_j}{a_0}\Gamma_{XX}^j(\beta_{p,\infty,\rm FPCR}).
	$$
	Denote by $P_p:{\rm range}(\Gamma_{XX})\to {\rm range}(\Gamma_{XX})$ 
	the operator that projects elements in ${\rm range}(\Gamma_{XX})$ to 
	${\rm span}\{f_{jj'}\in L_2(\mathbb I_X\times \mathbb I_Y)\mid 
		f_{jj'}(s,t)=\phi_{j,X}(s)\phi_{j',Y}(t), 1\leq j\leq p, j'\geq 1\}$.
	Thus $\beta_{p,\infty,\rm FPCR}=P_p(\beta^*)$.
	Since
	$\Gamma_{XX}^j(\beta_{p,\infty,\rm FPCR})=P_p[\Gamma_{XX}^j(\beta^*)]$,
	one has
	$$
		P_p\left[\beta^*+\sum_{j=1}^p\frac{a_j}{a_0}\Gamma_{XX}^j(\beta^*)\right]=0,
	$$
	implying that, 
	for all $p$,
	$$
		P_p(\beta^*)
		\in\{P_p(f)\mid f\in\overline{{\rm KS}_\infty(\Gamma_{XX}, \beta^*)}\}.
	$$
	Taking limits as $p\rightarrow\infty$ on both sides of the above formula,
	we obtain $\beta^*\in\overline{{\rm KS}_\infty(\Gamma_{XX}, \beta^*)}$ and accomplish the proof.
\end{proof}

\begin{proof}[Proof of \autoref{prop:conv.beta.p.hat}]
	Recall $\beta_{p,\rm fAPLS}$ \eqref{eq:beta.p.fapls} and $\hat\beta_{p,\rm fAPLS}$ \eqref{eq:beta.p.fapls.hat}
	and notations in defining them.
	The Cauchy-Schwarz inequality implies that
	\begin{align*}
		|\hat h_{jj'}&-h_{jj'}|
		\\
		\leq&\ \|\widehat\Gamma_{XX}^j(\beta^*)-\Gamma_{XX}^j(\beta^*)\|_2
		\|\widehat\Gamma_{XX}^{j'+1}(\beta^*)\|_2
		+
		\|\widehat\Gamma_{XX}^{j'+1}(\beta^*)-\Gamma_{XX}^{j'+1}(\beta^*)\|_2
		\|\Gamma_{XX}^j(\beta^*)\|_2
		\\
		\leq&\ 
		\|\widehat\Gamma_{XX}^j(\beta^*)-\Gamma_{XX}^j(\beta^*)\|_2
		\|\hat r_{XX}\|_2^{j+1}\|\beta^*\|_2
		+
		\|\widehat\Gamma_{XX}^{j'+1}(\beta^*)-\Gamma_{XX}^{j'+1}(\beta^*)\|_2
		\|r_{XX}\|_2^j\|\beta^*\|_2.
	\end{align*}
	By Lemmas \ref{lemma:conv.gamma.1} and \ref{lemma:conv.gamma.i},
    for each $\epsilon>0$ and $p\leq Cn^{1/2}$,
    there are positive $n_0$, $C_3$ and $C_4$ such that,
    for all $n>n_0$,
	\begin{align*}
		1-\epsilon
		\leq\Pr\Bigg[\bigcap_{j,j'=1}^p\Big\{
			|\hat h_{jj'}-h_{jj'}|
			\leq& 
			\|\widehat\Gamma_{XX}^j(\beta^*)-\Gamma_{XX}^j(\beta^*)\|_2
			(\|r_{XX}\|_2+C_0n^{-1/2})^{j'+1}\|\beta^*\|_2
		\\
			&+
			\|\widehat\Gamma_{XX}^{j'+1}(\beta^*)-\Gamma_{XX}^{j'+1}(\beta^*)\|_2
			\|r_{XX}\|_2^j\|\beta^*\|_2
	  	\Big\}\Bigg]
		 \\
		 \leq\Pr\Bigg[\bigcap_{j,j'=1}^p\Big\{
		    |\hat h_{jj'}-h_{jj'}|
		    \leq&\ 
		    n^{-1/2}\|r_{XX}\|_2^{i+j'}\{C_3\max(j,j')+C_4\}
		 \Big\}\Bigg].
	\end{align*}
    Thus
    \begin{align}
        \|\widehat{\bm H}_p-\bm H_p\|_2^2
        \leq& \sum_{j,j'=1}^p|\hat h_{jj'}-h_{jj'}|^2
        \notag\\
        =&\ O_p\left(
                n^{-1}\sum_{j,j'=1}^p\|r_{XX}\|_2^{2j+2j'}
            \right)
        + O_p\left\{
                n^{-1}\sum_{j,j'=1}^p\max(j^2,j'^2)\|r_{XX}\|_2^{2j+2j'}
            \right\}
        \notag\\
        =&\ \begin{cases}
            O_p(n^{-1}p^2\|r_{XX}\|_2^{4p}) + O_p(n^{-1}p^4\|r_{XX}\|_2^{4p})
            &\text{if }\|r_{XX}\|_2\geq 1
            \\
            O_p(n^{-1})
            &\text{if }\|r_{XX}\|_2< 1
        \end{cases}
        \notag\\
        =&\ \begin{cases}
            O_p(n^{-1}p^4\|r_{XX}\|_2^{4p})
            &\text{if }\|r_{XX}\|_2\geq 1
            \\
            O_p(n^{-1})
            &\text{if }\|r_{XX}\|_2< 1.
        \end{cases}
        \label{eq:diff.H.p}
    \end{align}
	Here $\|\cdot\|_2$ is abused for the matrix norm induced by the Euclidean norm,
	i.e., for arbitrary $\bm A\in\mathbb R^{p\times p'}$ and $\bm b\in\mathbb R^{p'\times 1}$
	$\|\bm A\|_2=\sup_{\bm b:\|\bm b\|_2=1}\|\bm A\bm b\|_2$
	is actually the largest eigenvalue of $\bm A$.
	It reduces to the Euclidean norm for vectors.
    It is analogous to \eqref{eq:diff.H.p} to deduce that
    \begin{equation}\label{eq:diff.alpha.j}
        \|\widehat{\bm\alpha}_p-\bm\alpha_p\|_2^2
        = \sum_{j=1}^p|\hat{\alpha}_j-\alpha_j|^2
        =\ \begin{cases}
            O_p(n^{-1}p^3\|r_{XX}\|_2^{2p})
            &\text{if }\|r_{XX}\|_2\geq 1
            \\
            O_p(n^{-1})
            &\text{if }\|r_{XX}\|_2< 1.
        \end{cases}
    \end{equation}
    Denote by $\tau_p$ the smallest eigenvalue of $\bm H_p$.
    Noting that $\|\bm H_p^{-1}\|_2=\tau_p^{-1}$,
    for $p\leq Cn^{1/2}$,
    $$
        \|(\widehat{\bm H}_p-\bm H_p)\bm H_p^{-1}\|_2
        \leq \tau_p^{-1}\|\widehat{\bm H}_p-\bm H_p\|_2
        =\begin{cases}
            O_p(n^{-1/2}\tau_p^{-1}p^2\|r_{XX}\|_2^{2p})
            &\text{if }\|r_{XX}\|_2\geq 1
            \\
            O_p(n^{-1/2}\tau_p^{-1})
            &\text{if }\|r_{XX}\|_2< 1.
        \end{cases}
    $$
    Introduce random matrix $\bm M_p\in\mathbb R^{p\times p}$ such that
    $\bm I-\bm H_p^{-1}(\widehat{\bm H}_p-\bm H_p)+\bm M_p=\{\bm I+\bm H_p^{-1}(\widehat{\bm H}_p-\bm H_p)\}^{-1}$,
    i.e.,
    $
    	\bm M_p
    	=\{\bm I+\bm H_p^{-1}(\widehat{\bm H}_p-\bm H_p)\}^{-1}\bm H_p^{-1}(\widehat{\bm H}_p-\bm H_p)\bm H_p^{-1}(\widehat{\bm H}_p-\bm H_p).
    $
    Therefore,
    $$
    	\|\bm M_p\|_2
    	\leq \|\bm I+\bm H^{-1}(\widehat{\bm H}_p-\bm H_p)\|_2^{-1}\|\bm H^{-1}(\widehat{\bm H}_p-\bm H_p)\|_2^2
    	\leq (1-\rho)^{-1}\tau_p^{-2}\|\widehat{\bm H}_p-\bm H_p\|_2^2,
    $$
    provided that $\tau_p^{-1}\|\widehat{\bm H}_p-\bm H_p\|_2\leq\rho<1$ 
    (refer to \citealp[][(7.18)]{DelaigleHall2012b}).
    Revealed by the identity that
    $\widehat{\bm H}_p^{-1}=\{\bm I+\bm H_p^{-1}(\widehat{\bm H}_p-\bm H_p)\}^{-1}\bm H_p^{-1}$,
    \begin{align}
    	\|\widehat{\bm H}_p^{-1}&-\bm H_p^{-1}\|_2
    	\notag\\
    	\leq&\ \{\|\bm H_p^{-1}(\widehat{\bm H}_p-\bm H_p)\|_2+\|\bm M_p\|_2\}\|\bm H_p^{-1}\|_2
    	\notag\\
    	=&\ \begin{cases}
    		O_p(n^{-1/2}\tau_p^{-2}p^2\|r_{XX}\|_2^{2p})
    		+
    		O_p(n^{-1}\tau_p^{-3}p^4\|r_{XX}\|_2^{4p})
		 	&\text{if }\|r_{XX}\|_2\geq 1
		 	\\
		 	O_p(n^{-1/2}\tau_p^{-2})+O_p(n^{-1}\tau_p^{-3})
		 	&\text{if }\|r_{XX}\|_2< 1.
		\end{cases}
		\label{eq:diff.H.p.inv}
    \end{align}
    Combining \eqref{eq:diff.alpha.j}, \eqref{eq:diff.H.p.inv} and the identity that
    \begin{align}
        \|\bm\alpha_p\|_2
        =&\ \left[\sum_{j=1}^p\left\{\int_{\mathbb I_Y}\int_{\mathbb I_X}r_{XY}(s,t)\Gamma_{XX}^j(\beta^*)(s,s')\dd s\dd s'\right\}^2\right]^{1/2}
        \notag\\
        \leq&\ \left[\sum_{j=1}^p\|r_{XY}\|_2^2\|\Gamma_{XX}^j(\beta^*)\|_2^2\right]^{1/2}
        \notag\\
        =&\ \begin{cases}
            O(p^{1/2}\|r_{XX}\|_2^p)
            &\text{if }\|r_{XX}\|_2\geq 1
            \\
            O(1)
            &\text{if }\|r_{XX}\|_2< 1,
        \end{cases}
        \label{eq:bound.alpha.p}
    \end{align}
    we reach that
    \begin{align}
        \|\widehat{\bm H}_p^{-1}&\widehat{\bm\alpha}_p-\bm H_p^{-1}\bm\alpha_p\|_2
        \notag\\
        \leq&\ \|\widehat{\bm H}_p^{-1}\|_2\|\widehat{\bm\alpha}_p-\bm\alpha_p\|_2
            +\|\widehat{\bm H}_p^{-1}-\bm H_p^{-1}\|_2\|\bm\alpha_p\|_2
        \notag\\
        =&\ \begin{cases}
        	O_p(n^{-1/2}\tau_p^{-1}p^{3/2}\|r_{XX}\|_2^p)
        	\notag\\
            \qquad +\ O_p(n^{-1/2}\tau_p^{-2}p^{5/2}\|r_{XX}\|_2^{3p})
            + O_p(n^{-1}\tau_p^{-3}p^{9/2}\|r_{XX}\|_2^{5p})
            &\text{if }\|r_{XX}\|_2\geq 1
            \\
            O_p(n^{-1/2}\tau_p^{-1})
            +O_p(n^{-1/2}\tau_p^{-2})
            +O_p(n^{-1}\tau_p^{-3})
            &\text{if }\|r_{XX}\|_2< 1
        \end{cases}
        \notag\\
        =&\ \begin{cases}
        	O_p(n^{-1/2}\tau_p^{-1}p^{3/2}\|r_{XX}\|_2^p)
        	\\
            \qquad +\ O_p(n^{-1/2}\tau_p^{-2}p^{5/2}\|r_{XX}\|_2^{3p})
            + O_p(n^{-1}\tau_p^{-3}p^{9/2}\|r_{XX}\|_2^{5p})
            &\text{if }\|r_{XX}\|_2\geq 1
            \\
            O_p(n^{-1/2}\tau_p^{-2})
            +O_p(n^{-1}\tau_p^{-3})
            \quad\text{(since $\tau_p\leq h_{jj}=O(1)$)}
            &\text{if }\|r_{XX}\|_2< 1.
        \end{cases}
        \label{eq:diff.H.alpha}
    \end{align}
	For each $(s,t)\in\mathbb I_X\times\mathbb I_Y$,
    \begin{align*}
        |\hat\beta_{p,\rm fAPLS}(s,t)&-\beta_{p,\rm fAPLS}(s,t)|^2
        \\
        =&\ \Bigg|
                [\widehat\Gamma_{XX}(\beta^*)(s,s'),\ldots,\widehat\Gamma_{XX}^p(\beta^*)(s,s')]
                \widehat{\bm H}_p^{-1}\widehat{\bm\alpha}_p
        \\      
                &-
                [\Gamma_{XX}(\beta^*)(s,s'),\ldots,\Gamma_{XX}^p(\beta^*)(s,s')]
                \bm H_p^{-1}\bm\alpha_p
            \Bigg|^2
        \\
        \leq&\ \Bigg|
            \|\widehat{\bm H}_p^{-1}\widehat{\bm\alpha}_p-\bm H_p^{-1}\bm\alpha_p\|_2
            \left[\sum_{j=1}^p\{\widehat\Gamma_{XX}^j(\beta^*)(s,s')\}^2\right]^{1/2}
        \\
            &+
            \|\bm H_p^{-1}\bm\alpha_p\|_2
            \left[\sum_{j=1}^p
                [\{\widehat\Gamma_{XX}^j-\Gamma_{XX}^j\}(\beta^*)(s,s')]^2\right]^{1/2}
        \Bigg|^2
        \\
        \leq&\ 
            2\|\widehat{\bm H}_p^{-1}\widehat{\bm\alpha}_p-\bm H_p^{-1}\bm\alpha_p\|_2^2
            \left[\sum_{j=1}^p\{\widehat\Gamma_{XX}^j(\beta^*)(s,s')\}^2\right]
        \\    
            &+
            2\|\bm H_p^{-1}\bm\alpha_p\|_2^2
            \left[\sum_{j=1}^p
                \{\widehat\Gamma_{XX}^j(\beta^*)(s,s')-\Gamma_{XX}^j(\beta^*)(s,s')\}^2\right].
    \end{align*}
    Thus $\|\hat{\beta}_{p,\rm fAPLS}-\beta_{p,\rm fAPLS}\|_2$ is bounded as below:
    \begin{align}
        \|&\hat{\beta}_{p,\rm fAPLS}-\beta_{p,\rm fAPLS}\|_2^2
        \notag\\
        \leq&\ 2\|\widehat{\bm H}_p^{-1}\widehat{\bm\alpha}_p-\bm H_p^{-1}\bm\alpha_p\|_2^2
            \sum_{j=1}^p\|\Gamma_{XX}^j(\beta^*)\|_2^2
            + 
            2\|\bm H_p^{-1}\aalpha_p\|_2^2
            \sum_{j=1}^p\|\Gamma_{XX}^j(\beta^*)-\widehat\Gamma_{XX}^j(\beta^*)\|_2^2
        \notag\\
        \leq&\ 2\|\widehat{\bm H}_p^{-1}\widehat{\bm\alpha}_p-\bm H_p^{-1}\bm\alpha_p\|_2^2
            \sum_{j=1}^p\|\Gamma_{XX}^j(\beta^*)\|_2^2
        \label{eq:beta.dist.1}\\
        &+ 2\tau_p^{-2}\|\bm\alpha_p\|_2^2
            \sum_{j=1}^p \|\widehat\Gamma_{XX}^j(\beta^*)-\Gamma_{XX}^j(\beta^*)\|_2^2,
        \label{eq:beta.dist.2}
    \end{align}
    where,
    owing to \eqref{eq:diff.H.alpha},
    $$
        \eqref{eq:beta.dist.1}
        =\begin{cases}
        	O_p(n^{-1}\tau_p^{-2}p^{4}\|r_{XX}\|_2^{4p})
        	\\
            \qquad +\ O_p(n^{-1}\tau_p^{-4}p^{6}\|r_{XX}\|_2^{8p})
            + O_p(n^{-2}\tau_p^{-6}p^{10}\|r_{XX}\|_2^{12p})
            &\text{if }\|r_{XX}\|_2\geq 1
            \\
            O_p(n^{-1}\tau_p^{-4})
            +O_p(n^{-2}\tau_p^{-6})
            &\text{if }\|r_{XX}\|_2< 1;
        \end{cases}
    $$
    the order of \eqref{eq:beta.dist.2} is jointly given by 
    \eqref{eq:bound.alpha.p} and \autoref{lemma:conv.gamma.i},
    i.e.,
    $$
        \eqref{eq:beta.dist.2}
        =\begin{cases}
			O(n^{-1}\tau_p^{-2}p^4\|r_{XX}\|_2^{4p})
            &\text{if }\|r_{XX}\|_2\geq 1
            \\
            O_p(n^{-1}\tau_p^{-2})
            &\text{if }\|r_{XX}\|_2< 1.
        \end{cases}
    $$
    In this way we deduce
    \begin{align}
       \|&\hat{\beta}_{p,\rm fAPLS}-\beta_{p,\rm fAPLS}\|_2^2
       \notag\\
       =&\ \begin{cases}
        	O_p(n^{-1}\tau_p^{-2}p^{4}\|r_{XX}\|_2^{4p})
        	\\
            \qquad +\ O_p(n^{-1}\tau_p^{-4}p^{6}\|r_{XX}\|_2^{8p})
            + O_p(n^{-2}\tau_p^{-6}p^{10}\|r_{XX}\|_2^{12p})
            &\text{if }\|r_{XX}\|_2\geq 1
            \\
            O_p(n^{-1}\tau_p^{-4})
            +O_p(n^{-2}\tau_p^{-6})
            &\text{if }\|r_{XX}\|_2< 1.
        \end{cases}
        \label{eq:beta.dist}
    \end{align}
    A set of necessary conditions for 
    the zero-convergence (in probability) of \eqref{eq:beta.dist}
    is contained in \ref{cond:diverge.p.add.L2}. 
    Once they are fulfilled, 
    we conclude the $L_2$ convergence (in probability) of $\hat{\beta}_{p,\rm fAPLS}$
    to $\beta^*$ following \autoref{prop:conv.beta.p}.
    
    We complete the proof by bounding the estimating error in the supremum metric:
    \begin{align*}
        \|&\hat{\beta}_{p,\rm fAPLS}-\beta_{p,\rm fAPLS}\|_\infty^2
        \\
        =&\ \left\|
                [\widehat\Gamma_{XX}(\beta^*),\ldots,\widehat\Gamma_{XX}^p(\beta^*)]
                \widehat{\bm H}_p^{-1}\widehat{\bm\alpha}_p
                -
                [\Gamma_{XX}(\beta^*),\ldots,\Gamma_{XX}^p(\beta^*)]
                \bm H_p^{-1}\bm\alpha_p
            \right\|_\infty^2
        \\
        \leq&\ 2\|\widehat{\bm H}_p^{-1}\widehat{\bm\alpha}_p-\bm H_p^{-1}\bm\alpha_p\|_2^2
            \sum_{j=1}^p\|\Gamma_{XX}^j(\beta^*)\|_\infty^2
            + 
            2\|\bm H_p^{-1}\aalpha_p\|_2^2
            \sum_{i=1}^p\|\Gamma_{XX}^j(\beta^*)-\widehat\Gamma_{XX}^j(\beta^*)\|_\infty^2
        \\
        \leq&\ 2\|\widehat{\bm H}_p^{-1}\widehat{\bm\alpha}_p-\bm H_p^{-1}\bm\alpha_p\|_2^2
            \sum_{j=1}^p\|\Gamma_{XX}^j(\beta^*)\|_\infty^2
        \quad\text{(compare \eqref{eq:beta.dist.1})}
        \\
        &+ 2\tau_p^{-2}\|\bm\alpha_p\|_2^2
            \sum_{j=1}^p \|\widehat\Gamma_{XX}^i(\beta^*)-\Gamma_{XX}^j(\beta^*)\|_\infty^2,
        \quad\text{(compare \eqref{eq:beta.dist.2})}
        \\
        =&\ \begin{cases}
        	O_p(n^{-1}\tau_p^{-2}p^{4}\|r_{XX}\|_\infty^{4p})
        	\\
            \qquad+\ O_p(n^{-1}\tau_p^{-4}p^{6}\|r_{XX}\|_\infty^{8p})
            + O_p(n^{-2}\tau_p^{-6}p^{10}\|r_{XX}\|_\infty^{12p})
            &\text{if }\|r_{XX}\|_\infty\geq 1
            \\
            O_p(n^{-1}\tau_p^{-4})
            +O_p(n^{-2}\tau_p^{-6})
            &\text{if }\|r_{XX}\|_\infty< 1,
        \end{cases}
    \end{align*}
    converging to zero (in probability) 
    with the satisfaction of \ref{cond:diverge.p.add.sup}.
    The zero-convergence (in probability) of 
    $\|\hat{\beta}_{p,\rm fAPLS}-\beta^*\|_\infty$ follows
    if we assume that $\|\beta_{p,\rm fAPLS}-\beta^*\|_\infty\to 0$ as $p$ diverges.
\end{proof}

\begin{proof}[Proof of \autoref{prop:conv.g.p.hat}]
    Notice that 
    \begin{align*}
        \|\hat g_{p,\rm fAPLS}(X_0)&-g(X_0)\|_2
        \\
        \leq&\ \|\bar Y-\mu_Y\|_2
			+\|\bar X-\mu_X\|_2\|\beta^*\|_2
			+\|X_0-\bar X\|_2\|\hat\beta_{p,\rm fAPLS}-\beta^*\|_2,
        \\
        \|\hat g_{p,\rm fAPLS}(X_0)&-g(X_0)\|_\infty
        \\
        \leq&\ \|\bar Y-\mu_Y\|_\infty
        	+\|\bar X-\mu_X\|_2\|\beta^*\|_\infty
            +\|X_0-\bar X\|_2\|\hat\beta_{p,\rm fAPLS}-\beta^*\|_\infty.
    \end{align*}
    The finite trace of $R_{XX}$ (resp. $R_{YY}$),
    viz. $\sum_{j=1}^\infty\lambda_{j,X}=\E(\|X-\mu_X\|_2^2)<\infty$ 
    (resp. $\sum_{j=1}^\infty\lambda_{j,Y}=\E(\|Y-\mu_Y\|_2^2)<\infty$),
    entails that
    $\|\bar X-\mu_X\|_2=o_{\rm a.s.}(1)$
    (resp. $\|\bar Y-\mu_Y\|_2=o_{\rm a.s.}(1)$); 
    see \citet[(2.1.3)]{Hoffmann-JorgensenPisier1976}.
    The proof is complete once we verify
    the zero-convergence (in probability and under \ref{cond:moment.add.sup}) 
    of $\|\bar Y-\mu_Y\|_\infty$
    following \citet[Theorem 2]{Hoffmann-Jorgensen1985}.
\end{proof}

\bibliographystyle{chicago}
\bibliography{mybibfile}
\end{document}